\documentclass[12pt]{article}

\usepackage[margin=2.5cm]{geometry}

\usepackage{amsmath,amsthm,amsfonts,amscd,amssymb,bbm,mathrsfs, enumerate}

\usepackage{graphicx,tikz}
\usepackage[pdfborder={0 0 0}]{hyperref}
\usetikzlibrary{arrows}

\def\RR{{\mathbb R}}
\def\CC{{\mathbb C}}

\def\A{{\mathcal A}}

\def\D{{\mathcal D}}

\def\H{{\mathcal H}}

\def\K{{\mathcal K}}
\def\M{{\mathcal M}}

\def\P{{\mathcal P}}
\def\R{{\mathcal R}}

\def\a{\alpha}
\def\b{\beta}

\def\L{{\mathrm L}}

\def\R{{\mathrm R}}

\def\Ad{{\hbox{\rm Ad\,}}}

\def\1{{\mathbbm 1}}

\def\u1net{{\A^{(0)}}}

\def\diff{{\rm Diff}}

\def\diffs1{\diff(S^1)}
\def\mob{{\rm M\ddot{o}b}}
\def\mob2{{\rm M\ddot{o}b}^{(2)}}

\def\supp{{\rm supp\,}}

\def\psl2r{{\rm PSL}(2,\RR)}
\def\sl2r{{\rm SL}(2,\RR)}
\def\su11{{\rm SU}(1,1)}
\def\2dmob{{\overline{\psl2r}\times\overline{\psl2r}}}
\def\<{\langle}
\def\>{\rangle}

\def\Im{\mathrm{Im}\,}
\def\re{\mathrm{Re}\,}
\def\im{\mathrm{Im}\,}
\def\res{\mathrm{Res}\,}
\def\blaschke{\mathrm{Blaschke}\,}

\def\poincare{{\P^\uparrow_+}}

\def\dom{{\mathrm{Dom}}}
\def\fct{\widetilde{\phi}}

\newcommand{\Om}{\Omega}

\newcommand{\epslim}{\lim_{\epsilon \searrow 0}}

\newtheorem{theorem}{Theorem}[section]

\newtheorem{proposition}[theorem]{Proposition}
\newtheorem{lemma}[theorem]{Lemma}
\theoremstyle{remark}

\title{Wedge-local fields in integrable models with bound states}
\date{}
\author{
{\bf Daniela Cadamuro} \\
e-mail: {\tt dc13950@bristol.ac.uk}\\
Department of Mathematics, University of Bristol \\
University Walk, Bristol BS8 1TW, United Kingdom \\
{\bf Yoh Tanimoto}\\
e-mail: {\tt hoyt@ms.u-tokyo.ac.jp}\\
Graduate School of Mathematical Sciences, The University of Tokyo\\
and Institut f\"ur Theoretische Physik, G\"ottingen University\\
3-8-1 Komaba Meguro-ku Tokyo 153-8914, Japan.\\
JSPS SPD postdoctoral fellow\\
}
\begin{document}
\maketitle
\begin{abstract}
Recently, large families of two-dimensional quantum field theories with
factorizing S-matrices have been constructed by the operator-algebraic
methods, by first showing the existence of observables localized in wedge-shaped regions.
However, these constructions have been limited to the class of S-matrices
whose components are analytic in rapidity in the physical strip.

In this work, we construct candidates for observables in wedges for scalar factorizing S-matrices with poles in the physical strip
and show that they weakly commute on a certain domain.
We discuss some technical issues concerning further developments, especially the self-adjointness
of the candidate operators here and strong commutativity between them.
\end{abstract}

\section{Introduction}\label{introduction}
In recent years, we have seen many interesting developments in constructing models of
Quantum Field Theory (QFT) in the operator-algebraic approach \cite{Haag96}.
Among them, one of the most important contributions was the construction of $1+1$ dimensional
scalar quantum field theories with factorizing S-matrices \cite{Lechner03, Lechner08}.
In that work, Lechner took a large family of analytic functions which satisfy certain conditions
and constructed quantum field theories which have these functions as the two-particle
S-matrix. This was remarkable because, although physicists conjectured several properties
and computed many interesting quantities, a mathematically consistent construction of the models
in axiomatic approaches had not been obtained before, and also because these models include many S-matrices which had not
been considered by physicists, as the arguments exploit only a few properties of the S-matrices and
do not depend on specific expressions or computations.
Now, it was clear at the first step of the construction \cite{Lechner03} that this
method does not apply directly if the function $S(\zeta)$ of the input has simple poles 
in the physical strip, $0< \im \zeta <\pi$.
On the other hand, some integrable QFTs are believed to have S-matrices with simple poles
and they should correspond to bound states of elementary particles.
The purpose of this paper is to extend the step of \cite{Lechner03} to the cases with
poles in the physical strip.

In the models constructed in \cite{Lechner03, Lechner08}, if $n$ particles are incoming, then
after the scattering process $n$ particles are outgoing. Furthermore, the scattering process
of $n$ particles can be systematically constructed from that of $2$ particles. In this case,
the S-matrix is said to be factorizing. This property is expected for so-called integrable
models, which have infinitely many conserved currents.
In some cases, for example the Sine-Gordon model, Fr\"ohlich and Seiler constructed
Euclidean Green's functions \cite{FS76} and proved that the S-matrix is nontrivial.
But, to the authors' knowledge, it is unknown whether it is factorizing.
More general integrable models have more complicated Lagrangians, and the understanding
of these models is limited to the perturbation theory.

An alternative approach to integrable models is the form factor program \cite{Smirnov92, BK04}.
In this program, instead of quantizing the fields, one conjectures the S-matrix from
the symmetry of the Lagrangian. Then the matrix components of local fields (form factors) are
computed and then the $n$-point function should be reconstructed from these matrix components.
Although many interesting quantities have been computed, a convergence proof of
the form factor expansion of the $n$-point functions is still lacking in almost all cases
\cite{BK04}.

The recent operator-algebraic approach, initiated by Schroer \cite{Schroer99}, considers
so-called wedge-local fields, rather than point-like quantum fields. Wedge-local fields
are observables localized in an infinitely extended, wedge-shaped region.
Indeed, this infinite extension allows one to take operators which are very simple in
the momentum space \cite{Lechner03, LS14}.
Thereafter, the algebra of local, finitely extended observables
should be defined as the intersection of two algebras corresponding to left- and right-wedges \cite{Lechner08}.
It is noteworthy that the existence proof of local observables avoids explicit computations
of quantum fields, but it is reduced to a certain phase space property called modular
nuclearity \cite{BDL90, BL04}. The resulting net of local observables reproduces
the S-matrix of the input, therefore the inverse problem for a class of S-matrices has been solved.

Now, let us recall that the S-matrices treated in \cite{Lechner03, Lechner08} must have
analytic components in the physical strip. In the operator-algebraic approach, there have been
many other constructions of algebras
corresponding to wedges \cite{Alazzawi13, BT13, BT15, BLS11, DT11, GL07, Lechner12, LST13, Tanimoto12-2, Tanimoto14-1}
and some related constructions \cite{LW11, Bischoff12, LL15} which take an analytic function as an input,
but it was always assumed that the function has no pole in a certain region.
On the other hand, a pole in the physical strip of the S-matrix components
is considered to correspond to a bound state.
Indeed, some integrable models, e.g.\! the Sine-Gordon model, are believed to have S-matrix
with poles in the physical strip. These S-matrices have not yet been treated in the operator-algebraic approach.
For an S-matrix $S$ without poles, Lechner constructed a pair of operator-valued distrubitions $\phi, \phi'$
and proved that they commute when they are smeared by test functions supported in the left- and right-wedges,
respectively. This computation involves a shift of the integral contour of a function containing $S$.
If one takes the same construction for $S$ with poles in the physical strip, the shift of the integral contour 
yields the residues of $S$ at these poles. Hence, the fields $\phi, \phi'$ themselves cannot be wedge-local.

The same problem has appeared also in the form factor program \cite{BK04}. Solutions of the form factor equations should represent
the matrix components of local operators, and their commutators should vanish when they are spacelike separated.
Indeed, such a formal proof has been given first by Smirnov for S-matrices without poles \cite{Smirnov92}.
A crucial part of the proof is again done by shifting integral contours, which is invalid when the S-matrices
have poles. For S-matrices with poles, they added further properties to form factors which make a correspondence between the poles
and ``bound states''. Quella showed that these new properties allow one to cancel the residues which come from
first-order poles \cite{Quella99}. Higher poles are discussed by Babujian, Foerster and Karowski \cite{BFK06}
and local commutativity appears to be formally maintained, at least for some specific models (the so-called $Z(N)$-Ising models).

In this paper, for a certain S-matrix with poles in the physical strip, we introduce
a new field $\fct = \phi + \chi$, by adding an operator $\chi$, which we call the bound-state operator, to the field $\phi$ of Lechner.
It is the commutator of $\chi$ with its reflected operator $\chi'$ which cancels the contribution of the residues 
coming from the commutator between $\phi$ and $\phi'$, mentioned above.
The operator $\chi$ has a formal integral expression in terms of Zamolodchikov-Faddeev operators $z,z^\dagger$ with complex arguments.

To prove wedge-commutativity in operator-algebraic approach, one considers the reflected field $\fct'$ using the action of the CPT operator and
shows that its commutator with $\fct$ vanishes for test functions supported in right- and left-wedges, respectively,
in the weak sense, namely, the matrix elements of the commutator between suitable vectors vanish.
Strong commutativity remains to be proven.
Yet, by assuming the existence of nice self-adjoint extensions, we give an argument for the Reeh-Schlieder property.
We also argue that our wedge-local fields are non-temperate polarization-free generators \cite{BBS01}.
Besides, we classify the scalar S-matrices with poles in the physical strip that satisfy the requirements of our analysis.
The general form of the S-matrix that we obtain is essentially given by a certain subclass of S-matrices
known from \cite{Lechner:2006} multiplied with a universal model independent factor which has poles in the physical strip.
This especially includes the S-matrices of the Bullough-Dodd model \cite{FMS93}.

Further, we clarify how the (weak) wedge localization of the field $\fct$ is related to the properties
of form factors \cite{BabujianFoersterKarowski:2006}. We conjecture a generalization of
the characterization theorem of local observables in \cite{BC14} to models with bound states.
In \cite{BC14} local observables are expanded into a series in terms of Zamolodchikov-Faddeev operators,
where the expansion coefficients are related to the form factors of the observable.
We sketch the outline of a proof that,
if its expansion coefficients of an operator fulfill a set of conditions, slightly modified from \cite{BC14},
then it formally commutes with our fields $\fct(f)$, hence they are local to each other.

The paper is organized as follows:
In Section \ref{sec:preliminaries} we recall the results of Lechner \cite{Lechner03, Lechner08} for S-matrices
analytic in the physical strip, and we introduce our general notation.
In Section \ref{sec:scalar} we summarize the properties of scalar S-matrices with poles in the physical strip
in the models under investigation, and we construct the wedge-local fields $\fct, \fct'$. A proof of weak
wedge-commutativity is also given in this section.
In Section \ref{sec:formfactor} we show that our wedge-local fields are compatible with the form factor program
in the sense explained above. Further, we explain how the form of the bound-state operator can be deduced
by formal arguments.
In Section \ref{sec:conclusions} we present our conclusions and open problems.
Appendix \ref{classification} is dedicated to the classification of scalar S-matrices fulfilling
the properties introduced in Section \ref{sec:scalar}.
Appendix \ref{domain} comments on the problem of finding suitable self-adjoint extensions of the field $\fct$.

\section{Preliminaries}\label{sec:preliminaries}
\subsection{Background: Haag-Kastler nets and wedge-local field in two dimensions}\label{nets}
Here we review the motivation to study wedge-local fields, the main objects which we construct
in this paper.

In the operator-algebraic approach to quantum field theory (QFT), a model of QFT
is realized as a net of algebras of observables.
A {\bf Haag-Kastler net}, or a {\bf Poincar\'e covariant net (of observables)} assigns to each
open region $O \subset \mathbb{R}^d$ a von Neumann algebra $\mathcal{A}(O)$ on a common Hilbert space $\mathcal{H}$.
If $O_1$ and $O_2$ are spacelike separated, then $\A(O_1)$ and $\A(O_2)$ should commute by Einstein causality.
In addition, one assumes that there is a continuous unitary representation $U$
of the Poincar\'e group on $\mathcal{H}$ and an invariant ground state, the vacuum $\Omega$.
The triple $(\mathcal{A},U,\Omega)$ is subject to standard axioms and considered as
a model of quantum field theory \cite{Haag96}.

If one has a Wightman field $\phi$ with certain regularity conditions, then one can construct the corresponding net by
defining ${\mathcal A}(O) := \{e^{i\phi(f)}: {\rm supp} f \subset O\}''$, where ${\mathcal M}'$ means the set
of bounded operators commuting with any element of ${\mathcal M}$. The double commutant ${\mathcal M}''$
is the smallest von Neumann algebra which includes ${\mathcal M}$. Actually it is required that
$\phi(f)$ and $\phi(g)$ have commuting spectral projections for $f, g$ with spacelike
separated support, and this follows from the regularity condition.
In this way, a Haag-Kastler net is considered as the operator-algebraic formulation of quantum field theory.

One of the difficulties in constructing Haag-Kastler nets lies in the infiniteness of the family
$\{{\mathcal A}(O)\}$ which needs to comply with the axioms. Instead, Borchers observed that for $d=2$,
actually the whole net can be recovered from the single von Neumann algebra ${\mathcal A}(W_{\mathrm R})$
associated with the (right-)wedge-shaped regions $W_{\mathrm R} := \{a \in {\mathbb R}^2: a_1 > |a_0|\}$ and the
spacetime symmetry $U$ (under a condition called Haag-duality). Furthermore, by the
Tomita-Takesaki theory of von Neumann algebras \cite{TakesakiII}, it is enough to know
the restriction of $U$ to the translation subgroup ${\mathbb R}^2$ \cite{Borchers92}.

A {\bf Borchers triple} $({\mathcal M},T,\Omega)$ consists of a von Neumann algebra ${\mathcal M}$ on ${\mathcal H}$, a unitary
representation $T$ of ${\mathbb R}^2$ with joint spectrum in the closed positive
lightcone $V_+$ and a vacuum vector $\Omega$
such that $\Omega$ is invariant under $T(a)$, ${{\rm Ad\,}} T(a) {\mathcal M} \subset {\mathcal M}$ for $a \in W_{\mathrm R}$
and ${\mathcal M}\Omega$ and ${\mathcal M}'\Omega$ are dense in ${\mathcal H}$ (these properties are called
{\bf cyclicity} and {\bf separating property} of $\Omega$ for ${\mathcal M}$, respectively).
It is easy to see that if $({\mathcal A},U,\Omega)$ is a Poincar\'e covariant net,
then $({\mathcal A}(W_{\mathrm R}), U|_{{\mathbb R}^2}, \Omega)$ is a Borchers triple.

Conversely, starting with a Borchers triple $({\mathcal M},T,\Omega)$, one can define a net
as follows: in two-spacetime dimensions, any double cone can be represented as the intersection
of two-wedges $(W_{\mathrm R}+a)\cap (W_{\mathrm L}+b) =: D_{a,b}$, where $W_{\mathrm L}$ is the reflected (left-)wedge.
Then one defines first von Neumann algebras ${\mathcal A}(D_{a,b})$ for double cones $D_{a,b}$
by ${\mathcal A}(D_{a,b}) := {{\rm Ad\,}} T(a)({\mathcal M}) \cap {{\rm Ad\,}} T(b)({\mathcal M}')$. For a general region $O$ one takes
${\mathcal A}(O) := \left(\bigcup_{D_{a,b}\subset O} {\mathcal A}(D_{a,b})\right)''$. Then one can show that
this ``net'' ${\mathcal A}$, a collection of algebras, satisfies isotony and locality. Furthermore, the representation $T$
extends to a representation $U$ of the Poincar\'e group which makes ${\mathcal A}$ covariant and
$\Omega$ is still invariant. In this way one obtains a ``net'' $({\mathcal A}, U, \Omega)$,
where the only missing property is that $\Omega$ is cyclic for ${\mathcal A}(O)$.

Hence, in the operator-algebraic approach, the construction of Haag-Kastler nets can be
split into two steps: (1) to construct Borchers triples, (2) to prove the cyclicity of $\Omega$.
In the following, we exhibit an attempt to (1).
For this purpose, we construct wedge-local fields.
Wedge-local fields $(\phi, \phi')$ is a pair of operator-valued distributions
such that $[\phi(f),\phi'(g)] = 0$ if $\supp f \subset W_\L$ and $\supp g \subset W_\R$.
Then it is natural to expect that $\M =\{e^{i\phi'(f)}:\supp g \subset W_\R\}''$,
together with appropriate $U$ and $\Omega$, gives a Borchers triple
(actually, this last step is our open problem).

\subsection{Zamolodchikov-Faddeev algebra and wedge-local fields}\label{zf}

We consider quantum field theories in $1+1$ dimensional Minkowski space (and with the convention $x \cdot y = x_0 y_0 - x_1 y_1$)
with factorizing S-matrices, which are characterized by the particle
spectrum and the two-particle scattering function as the main input in the theory. In the following,
we introduce the mathematical framework and the notation we will use to describe these models,
following \cite{Lechner03, Lechner08}. In particular, this subsection is
meant to be an overview on previous work on the topic \cite{Lechner03}
and to be an introduction to Section \ref{sec:scalar}, where we will consider models
with scalar two-particle S-matrices which have poles in the physical strip
$\RR + i(0,\pi)$.

\subsubsection{Single-particle space, S-symmetric Fock space, space-time
symmetries}\label{subsubfock}

We parametrize the momentum of a single particle with mass $m>0$
by the rapidity $\theta$:
\begin{equation*}
p(\theta):= m \left( \begin{array}{ccc}
\cosh\theta \\
\sinh\theta
\end{array} \right), \quad \theta \in \mathbb{R}.
\end{equation*}
The two-particle scattering function is generally a complex-valued meromorphic function
$S: \RR + i(0,\pi) \rightarrow \mathbb{C}$ with a certain number of symmetry properties.

These symmetry properties are the well-known properties of unitarity, hermitian analyticity
and crossing symmetry (see for example \cite{Korff:2000}) typically fulfilled by any
two-particle scattering function in a local integrable quantum field theory.
We summarize these properties in Section \ref{scalar-s}.

Using this function $S(\theta)$,
one can define a representation $D_n$ of the permutation group $\mathfrak{G}_n$
on $L^2(\mathbb{R}^n) = L^2(\RR)^{\otimes n}$, which acts as
\begin{equation*}
(D_n(\sigma)\Psi_n)(\pmb{\theta}) = S^\sigma(\pmb{\theta})\Psi_n(\pmb{\theta}^\sigma), \quad \sigma \in \mathfrak{G}_n,
\end{equation*}
where $\pmb{\theta}:=(\theta_1,\ldots,\theta_n)$, $\pmb{\theta}^\sigma:=(\theta_{\sigma(1)},\ldots,\theta_{\sigma(n)})$ and the factors $S^{\sigma}(\pmb{\theta})$ are given by
\begin{equation*}
S^\sigma(\pmb{\theta}) := \prod_{\substack{j<k\\ \sigma(j)>\sigma(k)}}S(\theta_{\sigma(j)}-\theta_{\sigma(k)}).
\end{equation*}
We are in particular interested in the space of {\bf S-symmetric functions} in $L^2(\mathbb{R}^n)$,
namely functions which are invariant under this action of $\mathfrak{G}_n$:
for any permutation $\sigma \in \mathfrak{G}_n$ it holds that
\begin{equation*}
\Psi_n(\pmb{\theta}) = S^{\sigma}(\pmb{\theta})\Psi_n(\pmb{\theta}^\sigma).
\end{equation*}
With these functions we can define the Hilbert space $\mathcal{H}$ of the theory.
In the case of models with only one species of massive scalar particle,
the single particle Hilbert space is $\mathcal{H}_1 = L^2(\mathbb{R},d\theta)$.
We define the $n$-particle Hilbert space $\mathcal{H}_n$ as the subspace of
$S$-symmetric functions in $L^2(\mathbb{R}^n) = \H_1^{\otimes n}$ and the Hilbert space
of the theory as $\mathcal{H} := \oplus_{n=0}^\infty \mathcal{H}_n$ with
$\mathcal{H}_0 =\mathbb{C}\Omega$.
We introduce the orthogonal projection $P_n := \frac{1}{n!}\sum_{\sigma \in \mathfrak{G}_n} D_n(\sigma)$
thus we can write $\mathcal{H}_n = P_n \mathcal{H}_1 ^{\otimes n}$, and we denote with $\mathcal{D}$
the dense subspace of $\mathcal{H}$ of vectors with finite particle number.

Furthermore, the Hilbert space $\mathcal{H}$ is endowed with a unitary representation of the proper
orthochronous Poincar\'e group, denoted by $\mathcal{P}_+^{\uparrow}$, which in two-dimensions consists
of translations and Lorentz boosts, acting on $\Psi =\oplus_{n=0}^\infty \Psi_n \in \mathcal{H}$ as follows
\begin{equation*}
(U(x,\lambda)\Psi)_n(\pmb{\theta})
 :=e^{i\sum_{k=1}^n p(\theta_k)\cdot x}\Psi_n(\theta_1 - \lambda, \cdots, \theta_n - \lambda).
\end{equation*}
The space-time reflection acts on the Hilbert space by an antiunitary representation $U(j)=:J$ as
\begin{equation*}
(U(j)\Psi)_n(\pmb{\theta}) := \overline{\Psi_n(\theta_n, \ldots, \theta_1)}.
\end{equation*}
Note that this action of the (proper) Poincar\'e group preserves the $S$-symmetrized
spaces $\H_n$.

We will adopt the following convention for the one-particle wave function associated with
$g \in \mathscr{S}(\mathbb{R}^2)$ \cite{Lechner03}:
\begin{equation*}
g^{\pm}(\theta) := \frac{1}{2\pi} \int d^2 x\, g(x) e^{\pm ip(\theta)\cdot x}.
\end{equation*}
If $g$ is supported in $W_\R$, then $g^+(\theta)$ has an entire analytic continuation,
which is bounded in $\RR +i(-\pi, 0)$. Furthermore, let $g_j (x) := \overline{g(-x)}$. Then, if $g$ is real, $(g_j)^+(\theta) = g^-(\theta)$
and $g^+(\overline \zeta) = \overline{g^-(\zeta)}$.
Finally, the proper Poincar\'e group acts on $\RR^2$ and also on the space of test functions naturally,
which we denote by $g_{(a,\lambda)}$ (the space-time reflection acts by $g \mapsto g_j$).
One can easily check that it is compatible with the action on the one-particle space:
\[
 (g_{(x, \lambda)})^+(\theta) = U_1(a,\lambda)g^+(\theta).
\]

\subsubsection{Zamolodchikov-Faddeev algebra, generalized creation and annihilation operators}\label{subsub:zamo}

Recalling \cite{Lechner08,Lechner03}, we consider a representation of the Zamolodchikov-Faddeev algebra
in terms of certain generalized creation and annihilation operators acting on $\mathcal{H}$,
$z^\dagger(\theta), z(\theta)$. These operator-valued distributions can be defined by their action on
$\Psi = (\Psi_n) \in \D$ as follows:
\begin{align*}
(z^\dagger(\theta)\Psi)_{n+1}(\pmb{\lambda}) &= \frac{\sqrt{n+1}}{(n+1)!}\sum_{\sigma \in \mathfrak{G}_{n+1}}S^\sigma(\pmb{\lambda})\delta(\theta -\lambda_{\sigma(1)})\Psi_n(\lambda_{\sigma(2)}, \ldots, \lambda_{\sigma(n+1)}),\\
(z(\theta)\Psi)_{n-1}(\pmb{\lambda}) &= \sqrt{n}\Psi_n(\theta,\pmb{\lambda}),
\end{align*}
and they formally fulfill the following algebraic relations:
\begin{align*}
z^\dagger(\theta)z^\dagger(\theta') &= S(\theta -\theta')z^\dagger(\theta')z^\dagger(\theta),\\
z(\theta)z(\theta') &= S(\theta -\theta')z(\theta')z(\theta),\\
z(\theta)z^\dagger(\theta') &= S(\theta' -\theta)z^\dagger(\theta')z(\theta)+\delta(\theta -\theta')\1.
\end{align*}
We shall note that $z^\dagger(f) = \int d\theta\, f(\theta) z^\dagger(\theta)$ are unbounded operators
on the space $\D$ of finite particle number states.

They can alternatively be defined in terms of the corresponding unsymmetrized creators and
annihilators $a(f), a^\dagger(f)$ (acting from the left), $f \in \mathcal{H}_1$, by setting $z^{\#}(f):=Pa^{\#}(f)P$,
where $P:= \bigoplus_{n=0}^\infty P_n$ is the orthogonal projection from the unsymmetrized Fock
space to the $S$-symmetric Fock space $\mathcal{H}$ \cite{Lechner03}.

\subsubsection*{Borchers triples for analytic S-matrices}
For the class of two-particle scattering functions $S(\theta)$ which are \emph{analytic in the physical strip} $\theta \in \mathbb{R} +i(0,\pi)$, local observables associated with wedge-regions, say with the standard left wedge $W_{\mathrm L}$, can be constructed by following an argument due to Schroer \cite{Schroer97} and Lechner \cite{Lechner03}. Specifically, they define a quantum field $\phi$ as
\begin{equation*}
\phi(f):= z^\dagger(f^+)+ z(J_1f^-), \quad f \in \mathscr{S}(\mathbb{R}^2).
\end{equation*}
We note that this reduces to the free field if $S(\theta)=1$.
In general, the field $\phi$ shares many properties with the free field as shown in
\cite[Proposition 4.2.2]{Lechner:2006}. In particular, it is defined on the subspace $\D$
of $\mathcal{H}$ of vectors with finite particle number and it is essentially
self-adjoint on $\D$ for real-valued $f$ (we denote its closure by the same symbol $\phi(f)$). It has the Reeh-Schlieder property, it solves
the free Klein-Gordon equation
and it transforms covariantly under the representation
$U(x,\lambda)$ of the proper orthochronous Poincar\'e group.
The only exception is the property of locality. The field $\phi(x)$ is not localized
at the space-time point $x$ in the usual sense, but rather in an infinitely extended wedge
with tip at $x$, $W_{\mathrm L} +x$. To make this more precise,
we introduce the ``reflected'' Zamolodchikov-Faddeev operators,
\begin{equation*}
z(\theta)' := J z(\theta)J, \quad z^\dagger(\theta)' := J z^\dagger(\theta)J,
\end{equation*}
and we define a new field $\phi'$ as, $f \in \mathscr{S}(\mathbb{R}^2)$,
\begin{equation*}
\phi'(f):= J \phi(f_j)J.
\end{equation*}
It has been shown in \cite[Proposition 2]{Lechner03} that the two fields $\phi,\phi'$
are relatively {\bf wedge-local}, in the sense  that the commutator $[e^{i\phi(f)}, e^{i\phi'(g)}]$
is zero for any real-valued test functions $f,g$ with $\supp f \subset W_\L$ and $\supp g \subset W_\R$,
Hence, we can interpret $\phi, \phi'$ as observables measurable in the wedges $W_\L, W_\R$, respectively.
This result can be obtained by computing the commutators of $z^\#$ with $z'^\#$
as shown in \cite[Lemma 4.2.5]{Lechner:2006} and by shifting a certain integral contour
which critically uses the analyticity of the two-particle scattering function $S(\theta)$
in the physical strip $\theta \in \mathbb{R} +i(0,\pi)$.

It should be remarked \cite[Proposition 2]{Lechner03} that also the properties of
the test functions $f,g$ play an important role in the proof of wedge-locality.
More specifically, the proof uses the fact that if $f \in \mathscr{S}(W_\L)$
(similar arguments apply to $g$ as well) then its Fourier transform $f^+$
fulfills certain analyticity, boundedness and symmetry properties in the strip $\mathbb{R} +i(0,\pi)$.

Starting from the fields $\phi, \phi'$ one can then define the corresponding von Neumann algebras.
The {\bf right-wedge algebra} is given by
\begin{equation*}
\M =\{e^{i\phi'(g)}: g \in \mathscr{S}(\mathbb{R}^2) \text{ real-valued},\, \supp g \subset W_\R\}'',
\end{equation*}
and other wedge algebras are defined by using the action of translations and reflection,
\begin{equation*}
\mathcal{A}(W_\R +x) := U(x,0)\M U(x,0)^*, \quad \mathcal{A}(W_\L +y):=J \mathcal{A}(W_\R -y)J.
\end{equation*}
The most important consequence of commutativity between $\phi$ and $\phi'$ is that
$\Omega$ is separating for $\M$.
The strictly local observables can then be recovered from the intersection of a right
and left wedge algebras, as explained in Section \ref{nets}.
Furthermore, Lechner proved \cite{Lechner08} that the algebras $\mathcal{A}(O)$ are nontrivial,
namely, are different from $\CC \1$.
In order to show this, he proved {\bf modular nuclearity condition},
a spectral property of the modular operators associated with the theory.
He showed that under a certain regularity condition on the scattering
function $S$ for all regions $O$, at least with a minimal size \cite{Alazzawi14},
the vacuum state $\Omega$ is cyclic and separating for the algebra $\mathcal{A}(O)$.

For the class of two-particle scattering functions $S(\theta)$ which are
\emph{not analytic in the physical strip} $\theta \in \mathbb{R} +i(0,\pi)$,
the fields $\phi(f), \phi'(f)$ fail to be wedge-local and the situation becomes
more complicated, as we will see in Section \ref{chi}.

\subsection{Poles in the S-matrix and bound states}\label{boundstates}
We now suppose that the two-particle scattering function $S(\zeta)$ has poles in the
{\bf physical strip} $\zeta \in \mathbb{R} +i(0,\pi)$. Physically,
these poles are related to the notion of ``bound state'', which here is interpreted
as the ``fusion'' of two bosons, and, although the S-matrix is factorizing,
their components are analytically related to each other.
Let us recall the physicists' arguments (see, for example, \cite{Dorey97}).
This section only tries to physically motivate the conditions on the $S$-matrices in Section \ref{scalar-s}.

Two elementary particles can be fused into a bound state if the total momentum of the two particles,
say ``1'' and ``2'', lies on the mass shell of a third particle, say ``b'' \cite{Quella99}, namely
\begin{equation*}
(p_{m_1}(\zeta_1) +p_{m_2}(\zeta_2))^2 = m_b^2,
\end{equation*}
where $m_b$ is the mass of the third particle and $p_{m_i}(\theta_i) := m_i (\cosh\theta_i, \sinh\theta_i)$.

Therefore, one can find $\zeta_b$ such that the momenta of the particles are related by
\begin{equation}\label{eqbound}
p_{m_1}(\zeta_1) + p_{m_2}(\zeta_2) = p_{m_b}(\zeta_b),
\end{equation}
where $\zeta_1, \zeta_2$ and $\zeta_b$ are the (possibly complex) rapidities of the two fusing bosons and of the bound particle, respectively.

To determine the rapidities of the particles involved and the position of the pole in the rapidity complex plane, we essentially need to solve Equation \eqref{eqbound}. In preparation for Section \ref{sec:scalar}, we will do that in a simpler case by considering a system with only one species of particle. In that case, the fusion process becomes quite simple: Two bosons of the same species fuse to form another boson of the same species, meaning also that the masses of the particles are equal.

To solve Equation \eqref{eqbound}, we make the \emph{ansatz} that the difference of
the rapidities of the fusing bosons is purely imaginary, that is,
\begin{equation*}
\zeta_1 -\zeta_2 = i\lambda.
\end{equation*}
Hence, we can parametrize the rapidities of the two fusing particles and of
the bound particle as $\zeta_1 = \theta +i\lambda_1$, $\zeta_2 = \theta +i\lambda_2$
and $\zeta_b =\theta$, with $\theta$ real. Using this parametrization for the momenta
of the particles, one can show that there is a unique solution (up to addition of $2\pi i$)
to Equation \eqref{eqbound} given by $\lambda_1 = \frac{\pi}{3}$ and $\lambda_2 = -\frac{\pi}{3}$.
Demanding that the difference $\lambda_1 -\lambda_2$ is in the physical strip,
we obtain $\lambda = \frac{2\pi}{3}$.

In the physical literature, one associates bound states with the poles of the S-matrix \cite{Dorey97}.
Specifically, we assume that the two-particle scattering function $S(\zeta)$ has a simple pole
at the point $i\lambda$ (the so called {\bf s-channel} pole), and we denote its residue by
$R:= \operatorname{res}_{\zeta = \frac{2\pi i}{3}}S(\zeta)$.

Due to the property of crossing symmetry (see \ref{crossing} in Section \ref{scalar-s}) of
the scattering function, $S(\zeta)$ has another pole at $\frac{\pi i}{3}$ (the so called
{\bf t-channel} pole) with residue $R' := \operatorname{res}_{\zeta = \frac{\pi i}{3}}S(\zeta)$.

Moreover, one can show that as a consequence of hermitian analyticity \ref{hermitian}
and unitarity \ref{unitarity} of the scattering function, the residue $R$ is purely imaginary
and one has $R' = -R$ again by crossing symmetry of $S$. We assume that $\Im R >0$.
Correspondingly, unitarity \ref{unitarity} also implies that $S(\zeta)$ has zeros at
the points $\zeta= -\frac{2\pi i}{3}, -\frac{\pi i}{3}$.

Finally, we will assume that except for the two simple poles at $\frac{2i\pi}{3},\frac{\pi i}{3}$
there are no other poles of $S$ in the physical strip. An example of a $1+1$-dimensional
integrable model with scalar S-matrix fulfilling these properties is the
Bullough-Dodd model \cite{FMS93}. A particle in these models is considered
as the bound state of two of the same species, and the two-particle S-matrix $S$ satisfies
a nontrivial equation, the so-called bootstrap equation \ref{bootstrap}.

\section{Scalar two-particle S-matrices}\label{sec:scalar}
Here we present our wedge-local fields associated with scalar S-matrices with poles in the physical strip.
We specify the properties of these S-matrices and give examples.
Then an important operator $\chi(f)$, which ``binds $f$ and the state'', is introduced.
We use this operator in order to construct our wedge-local fields.

\subsection{Properties of scalar S-matrix with poles}\label{scalar-s}
Let us consider the simplest case where the one-particle space has multiplicity one.
Since we are interested in S-matrices with poles in the physical strip, the bound state of two ``elementary''
particle must be the same particle, therefore must have the same mass.
As discussed in Section \ref{boundstates}, one observes that
the only possibility for the pole in s-channel is $\frac{2\pi i}3$.
By crossing symmetry, the S-matrix must have another pole at $\frac{\pi i}3$.
Furthermore, the residue of $S$ at $\frac{2\pi i}3$ must be a positive multiple of
the imaginary unit $i$. This is interpreted as a consequence of hermiticity of
the Hamiltonian \cite{CM89}\footnote{In the last paragraph of this paper \cite{CM89}, they claim that
a scalar S-matrix is impossible. The argument is incomplete, because they assume that
such an S-matrix should have the same zeros as the simplest S-matrix which does not satisfies
the condition on the residue, but actually zeros do not necessarily
have physical meaning and there is no reason to exclude them. Indeed,
one of the author of the same paper \cite{CM89} published
later a paper on the Bullough-Dodd model \cite{FMS93},
which is believed to have a scalar S-matrix and be unitary.}

Summarizing, we assume that our two-particle S-matrix $S$ is defined on $\RR + i(0, \pi)$
except for poles indicated below, has therefore $L^\infty$-boundary values at $\RR$ and $\RR + \pi i$, and
has the following properties (c.f.\! \cite{LS14}).
\begin{enumerate}
\renewcommand{\theenumi}{(S\arabic{enumi})}
 \renewcommand{\labelenumi}{\theenumi}
 \item \label{unitarity}  {\bf Unitarity.} $S(\theta)^{-1} = \overline{S(\theta)},\;\; \theta \in \RR$.
 \item \label{hermitian} {\bf Hermitian analyticity.} $S(-\theta) = S(\theta)^{-1},\;\; \theta \in \RR$.
 \item \label{crossing} {\bf Crossing symmetry.} $S$ is meromorphic and the boundary values satisfy
 $S(\theta) = S(\pi i - \theta), \theta \in \RR$.
 \item \label{bootstrap} {\bf Bootstrap equation.} $S\left(\theta + \frac{\pi i}3\right) = S(\theta)S\left(\theta + \frac{2\pi i}3\right), \theta \in \RR$.
 \item \label{poles-boundedness} {\bf Positive residue.} $S$ has a simple pole at $\frac{2\pi i}3$ and $\displaystyle{\underset{\zeta = \frac{2\pi i}3} \res S(\zeta) \in i\RR_+}$.
 Except this and the pole at $\frac{\pi i}3$, there is no pole in the physical strip $\zeta \in \RR + i(0,\pi)$ and
 $S(\zeta)$ is bounded in the complement of the union of neighborhoods of these poles. 
 \item \label{fermionic} {\bf Value at zero.} $S(0) = -1$.
\end{enumerate}
Actually, we will see in Appendix \ref{classification} that \ref{fermionic}
follows from \ref{unitarity}--\ref{poles-boundedness}.
Note also that we consider bosonic particles,
yet the observables are represented on the $S$-symmetric Fock space (see below).

We introduce $\eta = i\sqrt{2\pi|R|}$, in accordance with the literature \cite{Quella99}, recalling that $R$ is the residue of $S$ at $\zeta = \frac{2\pi i}3$.

\subsubsection*{Examples}
Although specific expressions of $S$ are not needed in our main construction,
we present here a family of examples, in order to show that the above set of axioms is not empty.

The simplest examples of $S$ which are believed to be associated to the Bullough-Dodd model \cite{FMS93}
are the following:
\[
 S_B(\theta) = \frac{\tanh\frac12\left(\theta + \frac{2\pi i}3\right)}{\tanh\frac12\left(\theta - \frac{2\pi i}3\right)} \cdot
 \frac{\tanh\frac12\left(\theta + \frac{(B-2)\pi i}3\right)}{\tanh\frac12\left(\theta - \frac{(B-2)\pi i}3\right)}
 \frac{\tanh\frac12\left(\theta - \frac{B\pi i}3\right)}{\tanh\frac12\left(\theta + \frac{B\pi i}3\right)},
\]
where $0 < B < 2, B \neq 1$. If we introduce the notation $f_A(\theta) := \frac{\tanh\frac12(\theta + A\pi i)}{\tanh\frac12(\theta - A\pi i)}$,
we can write it as $S_B(\theta) = f_\frac23(\theta)f_{\frac{B}3 - \frac23}(\theta) f_{-\frac{B}3}(\theta)$.
It holds that $S_B(\theta) = S_{2-B}(\theta)$. Furthermore, $S_1(\theta) = f_{-\frac23}(\theta)$ which has no pole
in the physical strip $\RR + i(0,\pi)$, therefore we exclude this case.

It is interesting to note that the first factor
\begin{align*}
  f_\frac23(\theta) &:= \frac{\tanh\frac12\left(\theta + \frac{2\pi i}3\right)}{\tanh\frac12\left(\theta - \frac{2\pi i}3\right)}
  = \frac{\sinh\frac12\left(\theta + \frac{2\pi i}3\right)}{\cosh\frac12\left(\theta + \frac{2\pi i}3\right)}
  \frac{\cosh\frac12\left(\theta - \frac{2\pi i}3\right)}{\sinh\frac12\left(\theta - \frac{2\pi i}3\right)} \\
  &= -\frac{\sinh\frac12\left(\theta + \frac{\pi i}3\right)}{\sinh\frac12\left(\theta - \frac{\pi i}3\right)}
  \frac{\sinh\frac12\left(\theta + \frac{2\pi i}3\right)}{\sinh\frac12\left(\theta - \frac{2\pi i}3\right)}
\end{align*}
satisfies all properties of the S-matrix but positivity of residue.
Indeed, we have
\begin{align*}
&\underset{\zeta = \frac{i\pi}3}\res f_\frac23(\zeta)=
-\frac{\sinh\frac12\left(\frac{i2\pi}3\right)}{\frac12}\cdot
\frac{\sinh\frac12(i\pi)}{\sinh\frac12\left(-\frac{i\pi}3\right)} = 2\sqrt 3 i, \\
&\underset{\zeta = \frac{i2\pi}3}\res f_\frac23(\zeta)=
-\frac{\sinh\frac12\left(i\pi\right)}{\sinh\frac12\left(\frac{i\pi}3\right)}\cdot
\frac{\sinh\frac12\left(\frac{i4\pi}3\right)}{\frac12} = -2\sqrt 3 i.
\end{align*}

The remaining factor
\[
 f_{\frac{B}3 - \frac23}(\theta) f_{-\frac{B}3}(\theta) =
 \frac{\tanh\frac12\left(\theta + \frac{(B-2)\pi i}3\right)}{\tanh\frac12\left(\theta - \frac{(B-2)\pi i}3\right)} \cdot
 \frac{\tanh\frac12\left(\theta - \frac{B\pi i}3\right)}{\tanh\frac12\left(\theta + \frac{B\pi i}3\right)},
\]
also satisfies \ref{unitarity}--\ref{bootstrap} except for positivity of residue, namely, it
has no pole in the physical strip and satisfies
\[
 f_{\frac{B}3 - \frac23}\left(\frac{2\pi i}3\right) f_{-\frac{B}3}\left(\frac{2\pi i}3\right) =
 \frac{\tan\left(\frac{B\pi}6\right)}{\tan\left(\frac{(4-B)\pi}6\right)}
 \frac{\tan\left(\frac{(2-B)\pi}6\right)}{\tan\left(\frac{(B+2)\pi}6\right)} < 0,
\]
where $0 < B < 1$ or $1 < B < 2$.
Therefore, the product $S_B(\theta) = f_\frac23(\theta)\cdot f_{\frac{B}3 - \frac23}(\theta) f_{-\frac{B}3}(\theta)$
has a positive residue at $\frac{2\pi i}3$, and therefore, satisfies \ref{poles-boundedness}.
\ref{fermionic} is now straightforward.

 From the above computations, it is also clear that a product
\[
 S_{B_1, B_2, \cdots B_n}(\theta) := f_\frac23(\theta) \prod_{k=1}^n f_{\frac{B_k}3 - \frac23}(\theta) f_{-\frac{B_k}3}(\theta),
\]
where $0< B_k < 2, B_k \neq 1$ and $n$ is odd, satisfies all the properties of S-matrix.
When $n > 1$, no Lagrangian is known for such two-particle S-matrix.
We will completely classify all S-matrices which comply with \ref{unitarity}--\ref{fermionic}
in Appendix \ref{classification}.

\subsection{The bound-state operator}\label{chi}
For a test function $f$ supported in $W_\L$, let us introduce an unbounded operator $\chi(f)$ on the $S$-symmetric Fock space $\H$.
This operator will preserve the particle number and will be interpreted as the operator which
``makes a bound state''. Actually, in a model with scalar S-matrix, a bound state particle
of two ``elementary particles'' is again the same elementary particle as there is only
one species of particle \cite{FMS93}.

Let us denote its component on $\H_n$ by $\chi_n(f)$.
Firstly, $\chi_0(f)$ annihilates the vacuum $\Om$.

For $\xi \in \H_1 = L^2(\RR,d\theta)$, we say that $\xi(\theta)$ has an {\bf $L^2$-bounded analytic continuation} on a strip parallel to $\RR$ (e.g. $\RR + i(-\epsilon,0)$ or $\RR + i(0,\epsilon), \epsilon > 0$)
if $\xi(\zeta)$ is an analytic function on that strip (with boundary value $\xi(\theta)$ at $\im \zeta =0$)
such that for each fixed $-\epsilon < \a < 0$ (respectively $0 < \a <\epsilon$) the function $\theta \mapsto \xi(\theta + i\a)$ is an $L^2$-function
in $\theta$, with uniform $L^2$-bound in $\alpha$.
The action $\chi_1(f)$ on $\H_1$ is given as follows:
\begin{equation}\label{eq:chi1}
\begin{aligned}
 \dom(\chi_1(f)) &:= \left\{\xi \in \H_1: \xi(\theta) \mbox{ has an } L^2\mbox{-bounded analytic continuation to }\theta - \frac{i\pi}3\right\}, \\
 (\chi_1(f)\xi)(\theta) &:= -i\eta f^+\left(\theta + \frac{i\pi}3\right) \xi\left(\theta - \frac{i\pi}3\right)
 = \sqrt{2\pi |R|}f^+\left(\theta + \frac{i\pi}3\right) \xi\left(\theta - \frac{i\pi}3\right),
\end{aligned}
\end{equation}
where $\eta$ and $R$ are given in Section \ref{scalar-s}. Note that
$f^+(\theta + \frac{\pi i}3)$ is bounded, therefore, $\chi_1(f)\xi$ is $L^2$.
Then, we define:
\begin{align*}
 \chi_n(f) &:= nP_n(\chi_1(f)\otimes\1\otimes\cdots\otimes\1)P_n, \\
 \chi(f) &= \bigoplus_{n=0}^\infty \chi_n(f),
\end{align*}
where $\dom(\chi(f))$ is the algebraic direct sum of $\dom(\chi_n(f))$,
hence is a subspace of $\D$ and, of course, the domain of the product $AB$ of possibly unbounded operators $A,B$ is
given by $\{\xi \in \dom(B): B\xi \in \dom(A)\}$.
We will discuss the question of self-adjoint extensions of these operators in Appendix \ref{domain}.

Let $\tau_j \in \mathfrak{S}_n$, $1 \le j \le n-1$, be the transposition which exchanges $j$ and $j+1$, and let  $\rho_k =\tau_{k-1}\cdots\tau_1$ be the cyclic permutation
\[
 \rho_k: (1,2,\cdots, n) \mapsto (k,1,2, \cdots, k-1, k+1,\cdots, n);
\]
note that $\rho_1$ is the unit element of $\mathfrak{S}_n$. With this notation, since $\sigma$ is a surjection of $\{1,\cdots, n\}$ onto itself, any permutation $\sigma \in \mathfrak{S}_n$ can be written as the product $\rho_{\sigma(1)}\underline\sigma$
with a permutation $\underline{\sigma}$ of $n-1$ numbers $(2,3,\cdots, n)$.

The operator $\chi_1(f)\otimes\1\otimes\cdots\otimes\1$ commutes with such $D_n(\underline{\sigma})$.
As $P_n = \frac1{n!}\sum_{\sigma\in\mathfrak{S}_n} D_n(\sigma)$, it holds that $D_n(\sigma)P_n = P_n$
and the $n$-particle component $\chi_n(f)$ can alternatively be expressed as follows:
\begin{align*}
 \chi_n(f) &= nP_n(\chi_1(f)\otimes\1\otimes\cdots\otimes\1)P_n \\
 &= \frac1{(n-1)!}\sum_{\sigma\in\mathfrak{S}_n} D_n(\rho_{\sigma(1)})D_n(\underline{\sigma})
 (\chi_1(f)\otimes\1\otimes\cdots\otimes\1)P_n \\
 &= \frac1{(n-1)!}\sum_{\sigma\in\mathfrak{S}_n} D_n(\rho_{\sigma(1)})
 (\chi_1(f)\otimes\1\otimes\cdots\otimes\1)P_n \\
 &= \sum_{1\le k \le n} D_n(\rho_k)(\chi_1(f)\otimes\1\otimes\cdots\otimes\1)P_n.
\end{align*}

Note that, if $\Psi_n$ is $S$-symmetric and in the domain of $\chi_1(f)\otimes\1\otimes\cdots\otimes\1$,
$\Psi_n(\theta_1,\cdots,\theta_n)$ has a meromorphic
continuation in $\theta_k$ and it holds that
\[
 \Psi_n\left(\theta_1 - \frac{\pi i}3, \theta_2, \cdots, \theta_n\right)
 = \prod_{2\le j \le k} S\left(\theta_j-\theta_1 + \frac{\pi i}3\right)
\Psi_n\left(\theta_2,\cdots,\theta_k,\theta_1 - \frac{\pi i}3,\theta_{k+1},\cdots,\theta_n\right).
\]
Therefore, each term in the last expression of $\chi_n(f)$ above can be further written as follows for $k\ge 2$
and for $\Psi_n = P_n\Psi_n$:
\begin{align*}
 &(D_n(\rho_k)(\chi_1(f)\otimes\1\otimes\cdots\otimes\1)\Psi_n)(\theta_1\cdots\theta_n) \\
 &=\; \prod_{1\le j \le k-1} S(\theta_k-\theta_j)
 ((\chi_1(f)\otimes\1\otimes\cdots\otimes\1)\Psi_n)(\theta_k,\theta_1,\cdots,\theta_{k-1},\theta_{k+1},\cdots\theta_n) \\
 &=\; -i\eta\prod_{1\le j \le k-1} S(\theta_k-\theta_j) f^+\left(\theta_k + \frac{\pi i}3\right)
 \Psi_n\left(\theta_k - \frac{\pi i}3,\theta_1,\cdots,\theta_{k-1},\theta_{k+1},\cdots\theta_n\right) \\
 &=\; -i\eta\prod_{1\le j \le k-1} S(\theta_k-\theta_j) S\left(\theta_j-\theta_k + \frac{\pi i}3\right)
 f^+\left(\theta_k + \frac{\pi i}3\right)
 \Psi_n\left(\theta_1,\cdots,\theta_k - \frac{\pi i}3,\cdots\theta_n\right) \\
 &=\; -i\eta\prod_{1\le j \le k-1} S\left(\theta_k-\theta_j + \frac{\pi i}3\right)
 f^+\left(\theta_k + \frac{\pi i}3\right)
 \Psi_n\left(\theta_1,\cdots,\theta_k - \frac{\pi i}3,\cdots\theta_n\right),
\end{align*}
where we reordered the variables in the third equality,
and we used the bootstrap equation in the last equality.
Note that, although $S$-factors have poles, these poles are cancelled by the zeros of $\Psi_n$
by the definition of the domain of $\chi_n(f)$ in Equation \eqref{eq:chi1} and the whole expression remains $L^2$
(this is the meaning that $\Psi_n \in \dom(\chi_n(f))$.
In this expression, the one-particle component $\chi_1(f)$ acts on each variable of $\Psi_n$
up to a correction of $S$ factors. As the one-particle action \eqref{eq:chi1} realizes the idea that
the state of one elementary particle $\xi$ is fused with $f^+$ into the same species of
particle, as in Section \ref{boundstates}, we might call it the ``bound state operator''.

Similarly, for a test function $g$ supported in the right wedge $W_\R$,
we introduce the reflected bound state operator $\chi'(g)$:
\begin{align*}
 \dom(\chi'_1(g)) &:= \left\{\xi \in \H_1: \xi(\theta) \mbox{ has an } L^2\mbox{-bounded analytic continuation to }\theta + \frac{i\pi}3\right\}, \\
 (\chi'_1(g)\xi)(\theta) &:= -i\eta g^+\left(\theta - \frac{i\pi}3\right) \xi\left(\theta + \frac{i\pi}3\right)
 = \sqrt{2\pi|R|}g^+\left(\theta - \frac{i\pi}3\right) \xi\left(\theta + \frac{i\pi}3\right), \\
\chi'_n(g) &:= nP_n(\1\otimes\cdots\otimes\1\otimes\chi'_1(g))P_n.
\end{align*}
For a real $g$, this operator is indeed related to $\chi$ by the CPT operator $J$:
\[
 \chi'(g) = J\chi(g_j)J.
\]
To see this, let us consider the one-particle components.
Since $J_1$ acts as the complex conjugation, it takes an analytic function in the lower strip to
an analytic function in the upper strip, therefore the domains of $\chi'(g)$ and of $J\chi(g_j)J$ coincide.
Recall that for a real $g$,
$(g_j)^+(\theta) = g^-(\theta)$ and $g^+(\overline\zeta) = \overline {g^-(\zeta)}$.
If $\xi \in \dom(\chi'_1(g))$, $(J_1\xi)(\zeta) =\overline{ \xi(\overline\zeta)}$ and we have
\begin{align*}
 (\chi'_1(g)\xi)(\theta) &= \sqrt{2\pi|R|}g^+\left(\theta - \frac{\pi i}3\right)\xi\left(\theta + \frac{\pi i}3\right) \\
 &= \overline{\sqrt{2\pi|R|}g^-\left(\theta + \frac{\pi i}3\right)\overline{\xi\left(\theta + \frac{\pi i}3\right)}} \\
 &= \overline{\sqrt{2\pi|R|}g^-\left(\theta + \frac{\pi i}3\right)(J_1\xi)\left(\theta - \frac{\pi i}3\right)} \\
 &= (J_1\chi(g_j)J_1\xi)(\theta).
\end{align*}
As $J_n$ commutes with $P_n$, we have $\chi'_n(g) = J_n\chi_n(g_j)J_n$.
Since the whole operators $\chi(g)$ and $\chi'(g)$ are defined as the direct sum,
the desired equality follows.

This operator $\chi'(g)$ has an alternative expression as $\chi(f)$ does:
\begin{align*}
 \chi'_n(g) &= \sum_{1\le k \le n} D_n(\rho'_k)(\1\otimes\cdots\otimes\1\otimes\chi'_1(g))P_n,
\end{align*}
where $\rho'_k = \tau_{n-k+1}\tau_{n-k+2}\cdots\tau_{n-1}$ are the cyclic permutations
\[
 \rho'_k:(1,\cdots, n-1,n) \longmapsto (1,\cdots n-k, n-k+2,\cdots, n-1, n, n-k+1)  
\]
and
\begin{align*}
 &(D_n(\rho'_k)(\1\otimes\cdots\otimes\1\otimes\chi'_1(g))\Psi_n)(\theta_1\cdots\theta_n) \\
 =&\; \sqrt{2\pi|R|}\prod_{n-k+2 \le j \le n} S\left(\theta_j- \theta_{n-k+1} + \frac{\pi i}3\right)
 g^+\left(\theta_{n-k+1} - \frac{\pi i}3\right) \\
 &\times \Psi_n\left(\theta_1,\cdots,\theta_{n-k+1} + \frac{\pi i}3,\cdots\theta_n\right),
\end{align*}
for $k \ge 2$.

Let us check basic properties of $\chi(f)$.
As $\chi'(g)$ is defined similar to $\chi(f)$, they share a number of properties.
The following propositions have the obvious counterparts for $\chi'(g)$.

\begin{proposition}\label{pr:chi:symmetry}
 For a real test function $f$ supported in $W_\L$,
 the operator $\chi(f)$ is densely defined and symmetric.
\end{proposition}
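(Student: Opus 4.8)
Both assertions are brought down to the level of the one–particle operator $\chi_1(f)$.

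\emph{Density.} I would first note that $\dom(\chi_1(f))$ is dense in $\H_1$: it contains every $\xi$ whose Fourier transform has compact support, as such $\xi$ is the restriction to $\RR$ of an entire function with $L^2$–norm on each line $\{\im w = \alpha\}$ bounded uniformly for $\alpha$ in compact intervals. For the $n$–particle component one checks that $P_n$ applied to finite linear combinations of tensor products of such one–particle vectors lies in $\dom(\chi_n(f))$; the only point needing attention is that the poles of the $S$–factors appearing in the formula for $\chi_n(f)$ derived above are compensated by the zeros that every $S$–symmetric function carries on the coincidence hyperplanes $\theta_i = \theta_j$ — a consequence of $S(0) = -1$, property \ref{fermionic}, exactly as in Lechner's treatment of $\phi$. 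Thus each $\dom(\chi_n(f))$ is dense in $\H_n$ and $\dom(\chi(f)) = \bigoplus_n \dom(\chi_n(f))$ is dense in $\H$.

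\emph{Reduction of symmetry.} Since $\chi(f) = \bigoplus_n \chi_n(f)$ preserves particle number, it suffices that each $\chi_n(f)$ be symmetric. Writing $\chi_n(f) = n\,P_n(\chi_1(f) \otimes \1 \otimes \cdots \otimes \1)P_n$ with $P_n = P_n^*$, and using $\dom(\chi_n(f)) \subset \H_n = \ran P_n$ so that $P_n$ acts as the identity there, one obtains $\<\chi_n(f)\Psi_n, \Phi_n\> = n\<(\chi_1(f) \otimes \1 \otimes \cdots \otimes \1)\Psi_n, \Phi_n\>$ and likewise for the other entry, whenever $\Psi_n, \Phi_n \in \dom(\chi_n(f))$. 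So symmetry of $\chi_n(f)$ follows from symmetry of $\chi_1(f) \otimes \1 \otimes \cdots \otimes \1$ on its analytic–continuation domain, and the contour–shift argument below goes through verbatim with $L^2(\RR^{n-1})$–valued functions replacing scalars.

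\emph{Symmetry of $\chi_1(f)$.} Set $c := \sqrt{2\pi|R|} > 0$, so $(\chi_1(f)\xi)(\theta) = c\,f^+(\theta + \tfrac{i\pi}{3})\,\xi(\theta - \tfrac{i\pi}{3})$. For $\xi, \zeta \in \dom(\chi_1(f))$,
\[
 \<\xi, \chi_1(f)\zeta\> = c\int_\RR \overline{\xi(\theta)}\;f^+\!\Big(\theta + \tfrac{i\pi}{3}\Big)\,\zeta\!\Big(\theta - \tfrac{i\pi}{3}\Big)\,d\theta .
\]
The integrand is the boundary value of a function analytic on $\RR + i(0, \tfrac{\pi}{3})$: $\overline{\xi(\theta)}$ continues to $\overline{\xi(\bar w)}$, analytic there because $\xi$ has an $L^2$–bounded continuation to $\theta - \tfrac{i\pi}{3}$; the factor $\zeta(\theta - \tfrac{i\pi}{3})$ is analytic there for the same reason; and $f^+$ is entire and bounded on $\RR + i(0, \pi)$ since $f \in \mathscr S(W_\L)$. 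Shifting the contour to $\im\theta = \tfrac{\pi}{3}$ gives
\[
 \<\xi, \chi_1(f)\zeta\> = c\int_\RR \overline{\xi\!\Big(\theta - \tfrac{i\pi}{3}\Big)}\;f^+\!\Big(\theta + \tfrac{2i\pi}{3}\Big)\,\zeta(\theta)\,d\theta .
\]
Since $f$ is real, $\overline{f^+(w)} = f^-(\bar w)$, and $p(w + i\pi) = -p(w)$ gives the crossing identity $f^+(w + i\pi) = f^-(w)$; together these yield $f^+(\theta + \tfrac{2i\pi}{3}) = f^-(\theta - \tfrac{i\pi}{3}) = \overline{f^+(\theta + \tfrac{i\pi}{3})}$ for real $\theta$. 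Hence the last integral equals $c\int_\RR \overline{f^+(\theta + \tfrac{i\pi}{3})\,\xi(\theta - \tfrac{i\pi}{3})}\,\zeta(\theta)\,d\theta = \<\chi_1(f)\xi, \zeta\>$, the asserted symmetry. (Equivalently, one may perform the contour shift directly in all $n$ variables of the formula for $\chi_n(f)$, reorganizing the residual $S$–factors by unitarity \ref{unitarity}, hermitian analyticity \ref{hermitian} and crossing \ref{crossing}; the outcome is the same.)

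\emph{Main obstacle.} The delicate point is the justification of the contour shift for these $L^2$–type functions: one must control the integrand as $\re\theta \to \pm\infty$ so as to apply Cauchy's theorem on the strip, which is precisely what the ``$L^2$–bounded analytic continuation'' hypothesis in the definition of $\dom(\chi_1(f))$ is designed to provide; this is handled by the Hardy–space–of–a–strip estimates standard in \cite{Lechner03}. The pole–versus–coincidence–zero bookkeeping in the density step is the only other place requiring (routine) care.
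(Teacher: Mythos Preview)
Your symmetry argument for $\chi_1(f)$ is essentially the paper's: both shift the $\theta$-contour by $\tfrac{i\pi}{3}$ and invoke the identity $f^+(\theta+\tfrac{2i\pi}{3})=\overline{f^+(\theta+\tfrac{i\pi}{3})}$ for real $f$. The paper first restricts to the core of vectors with compactly supported Fourier transform (so that rapid decrease makes the Cauchy shift elementary) and then extends by closure, whereas you invoke Hardy-space bounds directly on the full domain; either route is fine. The reduction $\chi_n(f)=nP_n(\chi_1(f)\otimes\1\cdots)P_n$ together with $(ABC)^*\supset C^*B^*A^*$ also matches.

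The density argument, however, has a genuine gap. You claim that $P_n(\xi_1\otimes\cdots\otimes\xi_n)$, with each $\xi_j$ of compact Fourier support, lies in $\dom(\chi_n(f))$ because ``the poles of the $S$-factors \dots\ are compensated by the zeros that every $S$-symmetric function carries on the coincidence hyperplanes $\theta_i=\theta_j$''. This is false. Take $n=2$: $P_2(\xi_1\otimes\xi_2)(\theta_1,\theta_2)=\tfrac12\bigl[\xi_1(\theta_1)\xi_2(\theta_2)+S(\theta_2-\theta_1)\xi_2(\theta_1)\xi_1(\theta_2)\bigr]$. Continuing $\theta_1\to\theta_1-\tfrac{i\pi}{3}$ the factor $S(\theta_2-\theta_1+\tfrac{i\pi}{3})$ acquires a simple pole at the \emph{real} locus $\theta_1=\theta_2$, and the $\xi$'s, being generic entire functions, provide no zero there. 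The zero coming from $S(0)=-1$ sits on the \emph{original} diagonal $\theta_1=\theta_2$, which after the shift becomes $\theta_1-\theta_2=\tfrac{i\pi}{3}$; it does not touch the pole you need to kill. Consequently the $L^2$-norm on the line $\im\theta_1=-\tfrac{\pi}{3}$ diverges and $P_2(\xi_1\otimes\xi_2)\notin\dom(\chi_2(f))$. (The reference to ``Lechner's treatment of $\phi$'' does not help: there $S$ is analytic in the strip and no continuation is required.)

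The paper repairs this by multiplying $P_n(\xi_1\otimes\cdots\otimes\xi_n)$ by an explicit bounded, invertible, symmetric function
\[
C_n(\pmb{\theta})=\prod_{1\le k<j\le n}\frac{(\theta_j-\theta_k-\tfrac{\pi i}{3})(\theta_k-\theta_j-\tfrac{\pi i}{3})}{(\theta_j-\theta_k-i\alpha)(\theta_k-\theta_j-i\alpha)},\qquad \alpha\notin[0,\pi],
\]
whose zeros are placed exactly at $\theta_j-\theta_k=\pm\tfrac{\pi i}{3}$ so as to cancel the poles introduced by the $S$-factors under continuation. Since $M_{C_n}$ is bounded with bounded inverse and commutes with $P_n$, the resulting vectors are still dense and now lie in $\dom(\chi_n(f))$. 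Without this (or some equivalent) device, your density claim does not stand.
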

\begin{proof}
 First let us look at $\chi_1(f)$. By its definition, it is densely defined.
 To see that it is symmetric, we take two vectors $\xi, \eta \in \dom(\chi_1(f))$ 
 which have compact Fourier transforms. Such vectors form a core for $\chi_1(f)$.
 Indeed, we can write $\chi_1(f) = \sqrt{2\pi|R|}x_f\Delta^{\frac16}_1$,
 where $x_f$ is the multiplication operator by $f^+\left(\theta + \frac{\pi i}3\right)$ and
 $\Delta_1 \xi(\theta) = \xi(\theta - 2\pi i)$ with the obvious domain.
 As $\Delta_1^{it}$ implements the real shift $(\Delta_1^{it}\xi)(\theta) = \xi(\theta + 2\pi t)$,
 the Fourier transform diagonalizes $\Delta_1$, and the vectors above form a core.
 Then, $\xi, \eta$ are the Fourier transforms of compactly supported functions,
 therefore they are rapidly decreasing. Recall that $f^+(\theta + i\alpha)$ is rapidly decreasing
 as well, if $0 \le \alpha \le \pi$.
 Furthermore, $\overline{\eta(\overline\zeta)}$ is analytic in $\zeta$.
 Now we compute:
 \begin{align*}
  \<\eta, \chi_1(f)\xi\> &= \sqrt{2\pi|R|}\int d\theta\,\overline{\eta(\theta)}f^+\left(\theta + \frac{\pi i}3\right)\xi\left(\theta - \frac{\pi i}3\right) \\
  &= \sqrt{2\pi|R|}\int d\theta\,\overline{\eta\left(\theta - \frac{\pi i}3\right)}f^+\left(\theta + \frac{2\pi i}3\right)\xi(\theta) \\
  &= \sqrt{2\pi|R|}\int d\theta\,\overline{\eta\left(\theta - \frac{\pi i}3\right)f^+\left(\theta + \frac{\pi i}3\right)}\xi(\theta) \\
  &= \<\chi_1(f)\eta, \xi\>,
 \end{align*}
 where we used the Cauchy theorem in the second equality and
 $f^+(\zeta) = \overline{f^-(\overline{\zeta})} = \overline{f^+(\overline{\zeta} + i\pi)}$ in the third.
 As this equality holds on a core of $\chi_1(f)$, so does it on the whole domain and
 we obtain the symmetry of $\chi_1(f)$.
 
 Let us check that $\chi_n(f)$ is densely defined. As $P_n$ is bounded,
 it is enough to see that $(\chi_1(f)\otimes\1\otimes\cdots\otimes \1)P_n$ is densely defined.
 The range of $P_n$ is the set of $S$-symmetric functions, while the domain of
 $\chi_1(f)\otimes\1\otimes\cdots\otimes \1$ is the functions which has an $L^2$-bounded analytic continuation
 to $\frac{\pi i}3$ in the first variable.
 Let us take an arbitrary set $\{\xi_1, \cdots \xi_n\}$ of $n$ vectors in the domain of $\chi_1(f)$.
 Then,
 \begin{align*}
  &(P_n(\xi_1\otimes \cdots \otimes \xi_n))(\theta_1,\cdots,\theta_n) 
  &= \frac{1}{n!}\left(\sum_{\sigma\in\mathfrak{S}n} \prod_{\substack{j<k\\ \sigma(j)>\sigma(k)}}S(\theta_{\sigma(j)}-\theta_{\sigma(k)})
  \cdot \xi_{\sigma(1)}(\theta_1)\cdots\xi_{\sigma(n)}(\theta_n)\right)
 \end{align*}
 is of course $S$-symmetric, but
 has poles which come from the $S$-factors. Such vectors are a dense subspace of
 $P_n\H_n$. Note that all these poles come from the poles of $S$ at $\theta_j - \theta_k = \frac{\pi li}3$,
 $l = 1,2$ and $1\le  k < j \le n$. In order to compensate these poles, we can multiply it $(n-2)!$ times by
\[
  C_n(\pmb{\theta}) := \prod_{1 \le k < j \le n} \frac{(\theta_j - \theta_k - \frac{\pi i}3)
  (\theta_k - \theta_j - \frac{\pi i}3)}{
  (\theta_j - \theta_k - i\alpha)
  (\theta_k - \theta_j - i\alpha)},
\]
 where $\alpha < 0$ or $\pi < \alpha$.
 This is a bounded invertible symmetric function in the real variables $\pmb{\theta}$.
 Therefore, the multiplication operator $M_{C_n}$ by $C_n$ preserves $\H_n$ and
 its image of a dense subspace is again dense.
 The functions in the image have an analytic continuations in the first
 (actually any) variable to $\frac{\pi i}3$ and are $S$-symmetric, therefore in the domain of
 $(\chi_1(f)\otimes\1\otimes\cdots\otimes\1)P_n$.

 Now the symmetry of $\chi_n(f)$ follows from the symmetry of $\chi_1(f)$ and
 a general fact $(ABC)^* \supset (C^*B^*A^*)$. The whole operator $\chi(f)$ is by definition the direct
 sum of these operators, therefore densely defined and symmetric as well.
\end{proof}

In Section \ref{zf}, we saw that there is an action $U$ of the Poincar\'e group on $\H$.
Let us check that $\chi(f)$ is covariant with respect to $U$.

\begin{proposition}\label{pr:chi:covariance}
 Let $f$ be a test function supported in $W_\L$ and $(a,\lambda) \in \poincare$ such that
 $a \in W_\L$. Then it holds that
 $\Ad U(a,\lambda)(\chi(f)) \subset \chi(f_{(a,\lambda)})$.
\end{proposition}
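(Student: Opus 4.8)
The plan is to reduce the statement to a one-particle computation and then propagate it through the symmetrization. First I would unwind both sides: by definition $\chi(f) = \bigoplus_n \chi_n(f)$ with $\chi_n(f) = n P_n(\chi_1(f)\otimes\1\otimes\cdots\otimes\1)P_n$, and $U(a,\lambda) = \bigoplus_n U_n(a,\lambda)$ preserves each $\H_n$ and commutes with $P_n$ (since the Poincaré action preserves $S$-symmetry, as recalled in Section~\ref{zf}). Hence it suffices to show, for each $n$, that $U_n(a,\lambda)\,\chi_n(f)\,U_n(a,\lambda)^* \subset \chi_n(f_{(a,\lambda)})$ on the appropriate domains, and this in turn will follow once I check the one-particle intertwining relation together with a domain inclusion.

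The core step is the one-particle identity. Writing $\chi_1(f)\xi(\theta) = \sqrt{2\pi|R|}\,f^+(\theta+\tfrac{i\pi}{3})\,\xi(\theta-\tfrac{i\pi}{3})$ and using $(U_1(a,\lambda)\xi)(\theta) = e^{ip(\theta)\cdot a}\xi(\theta-\lambda)$ together with the compatibility $(f_{(a,\lambda)})^+(\theta) = U_1(a,\lambda)f^+(\theta)$ recalled in Section~\ref{subsubfock}, I would compute $U_1(a,\lambda)\chi_1(f)U_1(a,\lambda)^*\xi(\theta)$ directly. The boost part shifts all rapidity arguments by $\lambda$ and is immediate; the translation part produces a factor $e^{ip(\theta)\cdot a}$ acting on the outer variable and $e^{-ip(\theta-\lambda)\cdot a}$ from $U_1^*$, and one must verify that the remaining phase, after the analytic continuations $\theta \mapsto \theta+\tfrac{i\pi}{3}$ on the $f^+$ factor and $\theta\mapsto\theta-\tfrac{i\pi}{3}$ on $\xi$, reassembles correctly into $(f_{(a,\lambda)})^+(\theta+\tfrac{i\pi}{3})$ times $\xi(\theta-\tfrac{i\pi}{3})$ up to the correct translation phase. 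This is a bookkeeping computation with the exponential $e^{ip(\zeta)\cdot a}$ continued to complex $\zeta$; the key point making it work is that $p(\theta)\cdot a = m(a_0\cosh\theta - a_1\sinh\theta)$ is entire in $\theta$, so the continuations are unobstructed, and the hypothesis $a\in W_\L$ guarantees that the relevant exponentials stay bounded on the strips $\RR+i[-\tfrac{\pi}{3},0]$ and $\RR+i[0,\tfrac{\pi}{3}]$, so that domains are genuinely respected rather than merely formally matched.

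The domain inclusion is where I expect the only real subtlety, and it is the reason the statement is an inclusion $\subset$ rather than an equality. I would argue that if $\xi \in \dom(\chi_1(f_{(a,\lambda)}))$ — i.e.\ $\xi$ has an $L^2$-bounded analytic continuation to $\RR-\tfrac{i\pi}{3}$ — need not imply $U_1(a,\lambda)^*\xi \in \dom(\chi_1(f))$ in general, but the converse inclusion of graphs holds: whenever $\xi \in U_1(a,\lambda)\dom(\chi_1(f))$, the identity above shows $\chi_1(f_{(a,\lambda)})$ is defined on $\xi$ and agrees with $U_1(a,\lambda)\chi_1(f)U_1(a,\lambda)^*\xi$. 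Concretely, if $\eta\in\dom(\chi_1(f))$ then $\eta$ continues analytically and $L^2$-boundedly to $\RR-\tfrac{i\pi}{3}$; the boost by $\lambda$ is a real translation of rapidity and preserves this, and the translation by $a\in W_\L$ multiplies by $e^{ip(\theta)\cdot a}$, which for $a\in W_\L$ is bounded and analytic on $\RR+i[-\tfrac{\pi}{3},0]$ (one checks $|e^{ip(\theta-is)\cdot a}| = e^{m(a_1\sin s\cosh\theta + \dots)}$ stays controlled precisely when $a$ lies in the left wedge), so $U_1(a,\lambda)\eta$ retains an $L^2$-bounded continuation to $\RR-\tfrac{i\pi}{3}$, i.e.\ $U_1(a,\lambda)\eta \in \dom(\chi_1(f_{(a,\lambda)}))$. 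Then I propagate: $\chi_1(f)\otimes\1\otimes\cdots\otimes\1$ conjugated by $U_n(a,\lambda) = U_1(a,\lambda)^{\otimes n}$ gives $\chi_1(f_{(a,\lambda)})\otimes\1\otimes\cdots\otimes\1$ on the corresponding tensor domain, sandwiching by $P_n$ on both sides preserves this because $U_n$ commutes with $P_n$, and taking the direct sum over $n$ yields $\Ad U(a,\lambda)(\chi(f)) \subset \chi(f_{(a,\lambda)})$. The main obstacle, then, is purely the care needed in tracking which strip each factor lives on and confirming the wedge condition on $a$ is exactly what keeps the continued exponential bounded — the algebra itself is routine once that is set up.
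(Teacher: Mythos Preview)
Your approach matches the paper's: reduce to the one-particle operator via the direct-sum structure and the commutation of $U_n$ with $P_n$, then treat boosts (trivially, since real rapidity shifts preserve the domain) and translations separately. The paper makes explicit the algebraic identity $p\!\left(\theta+\tfrac{i\pi}{3}\right) + p\!\left(\theta-\tfrac{i\pi}{3}\right) = p(\theta)$ that underlies your ``bookkeeping'' step for translations; this is precisely what makes the leftover phase $e^{ia\cdot(p(\theta)-p(\theta-\frac{i\pi}{3}))}$ combine with $f^+(\theta+\tfrac{i\pi}{3})$ to yield $(f_{(a,0)})^+(\theta+\tfrac{i\pi}{3})$, and it deserves to be stated rather than left implicit.

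Your domain argument, however, has a sign error. You assert that for $a\in W_\L$ the multiplier $e^{ip(\theta)\cdot a}$ (i.e.\ $U_1(a,0)$) is bounded on $\RR+i[-\tfrac{\pi}{3},0]$, so that $U_1(a,0)$ preserves $\dom(\chi_1(f))$. A direct computation gives $\Re\bigl(ip(\theta-i\alpha)\cdot a\bigr) = m\sin\alpha\,(a_0\sinh\theta - a_1\cosh\theta)$, and for $a\in W_\L$ (so $-a_1>|a_0|$) with $0<\alpha<\pi$ this is strictly positive and diverges as $|\theta|\to\infty$; hence $e^{ia\cdot p}$ is \emph{unbounded} on the lower strip. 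It is the adjoint $U_1(a,0)^*$, multiplying by $e^{-ia\cdot p(\theta)}$, that is bounded there and therefore preserves the domain --- this is exactly the observation the paper records. With that correction the computation on $\dom(\chi_1(f))$ goes through (and one may note that it then literally establishes the reverse graph inclusion, which is harmless for the subsequent use since only agreement on a common dense domain is needed).
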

\begin{proof}
 As both $\chi(f)$ and $U$ are the direct sums $\chi(f)=\bigoplus_n \chi_n(f)$ and $U(a,\lambda) = \bigoplus_n U_n(a,\lambda)$ respectively,
 we can restrict ourselves to $\H_n$. Furthermore, by definition
 $\chi_n(f) = nP_n(\chi_1(f)\otimes\1\otimes\cdots\otimes\1)P_n$ and $P_n$ commutes
 with $U_n(a,\lambda) = U_1(a,\lambda)\otimes\cdots\otimes U_1(a,\lambda)$,
 therefore it is enough to show that $U_1(a,\lambda)\chi_1(f)U_1(a,\lambda)^* \subset \chi_1(f_{(a,\lambda)})$.
 
 Consider first a pure boost $(0,\lambda)$. Note that $U(0,\lambda)$ preserves the domain $\dom(\chi_1(f))$.
 We saw in Section \ref{zf} that $(U_1(0,\lambda)f^+)(\theta) = f^+(\theta-\lambda) = (f_{(0,\lambda)})^+(\theta)$.
 On the other hand, for $\xi\in\dom(\chi_1(f)) = \dom(\chi_1(f_{(0,\lambda)}))$ we have
 \begin{align*}
 (U_1(0,\lambda)\chi_1(f)U_1(0,\lambda)^*\xi)(\theta)
 &= (\chi_1(f)U_1(0,\lambda)^*\xi)(\theta-\lambda) \\
 &= -i\eta f^+\left(\theta - \lambda + \frac{\pi i}3\right)\cdot
 (U_1(0,\lambda)^*\xi)\left(\theta-\lambda - \frac{\pi i}3\right) \\
 &= -i\eta f^+\left(\theta - \lambda + \frac{\pi i}3\right)\cdot
 \xi\left(\theta - \frac{\pi i}3\right) \\
 &= (\chi_1(f_{(0,\lambda)})\xi)(\theta).
 \end{align*}
 Hence the covariance with respect to boosts holds.
 
 Next we take a pure translation $(a,0)$, where $a \in W_\L$.
 As $U_1(a,\lambda)^*$ multiplies by $e^{-ia\cdot p(\theta)}$, which has a bounded analytic continuation
 in $\RR + i(-\pi,0)$, the domain of $\chi_1(f)$ is preserved.
 Again, recall the one-particle action $(U_1(a,0)f^+)(\theta) = e^{ia\cdot p(\theta)}f^+(\theta) = (f_{(a,0)})^+(\theta)$.
 For a vector $\xi\in\dom(\chi_1(f)) = \dom(\chi_1(f_{(a,0)}))$, it holds that
 \begin{align*}
 (U_1(a,0)\chi_1(f)U_1(a,0)^*\xi)(\theta)
 &= e^{ia\cdot p(\theta)}(\chi_1(f)U_1(a,0)^*\xi)(\theta) \\
 &= -i\eta e^{ia\cdot p(\theta)}f^+\left(\theta + \frac{\pi i}3\right)\cdot
 (U_1(a,0)^*\xi)\left(\theta - \frac{\pi i}3\right) \\
 &= -i\eta e^{ia\cdot p(\theta)} f^+\left(\theta + \frac{\pi i}3\right)\cdot
 e^{-ia\cdot p\left(\theta - \frac{\pi i}3\right)}\xi\left(\theta - \frac{\pi i}3\right).
 \end{align*}
 As we have
 \begin{align*}
  p\left(\theta + \frac{\pi i}3\right) + p\left(\theta - \frac{\pi i}3\right)
  &= m\left(\begin{array}{c}
           \frac12 \cosh\theta + i\frac{\sqrt 3}2\sinh\theta \\
           \frac12 \sinh\theta + i\frac{\sqrt 3}2\cosh\theta 
          \end{array}\right) +
    m\left(\begin{array}{c}
           \frac12 \cosh\theta - i\frac{\sqrt 3}2\sinh\theta \\
           \frac12 \sinh\theta - i\frac{\sqrt 3}2\cosh\theta 
          \end{array}\right) \\
 &= m\left(\begin{array}{c}
           \cosh\theta  \\
           \sinh\theta 
          \end{array}\right) = p(\theta),
 \end{align*}
 or equivalently, $p(\theta) - p\left(\theta - \frac{\pi i}3\right) =  p\left(\theta + \frac{\pi i}3\right)$,
 we obtain
 \begin{align*}
 (U_1(a,0)\chi_1(f)U_1(a,0)^*\xi)(\theta)
 &= -i\eta e^{ia\cdot \left(p(\theta)- p\left(\theta - \frac{\pi i}3\right)\right)} f^+\left(\theta + \frac{\pi i}3\right)\cdot
 \xi\left(\theta - \frac{\pi i}3\right) \\
 &= -i\eta e^{ia\cdot p\left(\theta + \frac{\pi i}3\right)} f^+\left(\theta + \frac{\pi i}3\right)\cdot
 \xi\left(\theta - \frac{\pi i}3\right) \\
 &= (\chi_1(f_{(a,0)})\xi)(\theta),
 \end{align*}
 and this is the covariance with respect to translations.
\end{proof}

\subsubsection*{Formal expression}
We argue that our operator $\chi(f)$ defined above is formally equivalent to
\[
-i\eta \int d\theta\, f^+\left(\theta +\frac{\pi i}3\right) z^\dagger(\theta)z\left(\theta - \frac{\pi i}{3}\right).
\]
This formal expression preserves the particle number, because it consists of a creation operator $z^\dagger$
and an annihilation operator $z$. In this argument, we do not pay much attention to domains.
We will not use this formal expression later in proofs, yet it is interesting to
observe that the operator $\chi(f)$ has such a simple expression.

We know that $z(f) = Pa(f)P$ and $z^\dagger(f) = Pa^\dagger(f)P$,
where $P = \bigoplus_n P_n$ is the projection from the unsymmetrized Fock space to
the $S$-symmetric Fock space.
Let us take an $S$-symmetric vector $\Psi_n \in \H_n$.
The unsymmetrized annihilation operator $a(\zeta)$ substitutes the first variable by
a fixed number $\zeta$.
As $\Psi_n$ is an $S$-symmetric function with $n$ variables, then after the action of 
$a$, it is still an $S$-symmetric function with $n-1$ variables.
This means that we can remove several factors of $P$ in the above formal expression and obtain
\begin{align*}
&  -i\eta \int d\theta\, f^+\left(\theta +\frac{\pi i}3\right) z^\dagger(\theta)z\left(\theta - \frac{\pi i}{3}\right)\Psi_n \\
=&\; -i\eta P_n \int d\theta\, f^+\left(\theta +\frac{\pi i}3\right) a^\dagger(\theta)a\left(\theta - \frac{\pi i}3\right)\Psi_n.
\end{align*}
Let us look at the integrand formally. For a fixed $\theta$, the action of the annihilation operator gives
\[
\left(a\left(\theta - \frac{\pi i}3\right)\Psi_n\right)(\theta_1,\cdots, \theta_{n-1})
= \sqrt n \Psi_n\left(\theta - \frac{\pi i}3,\theta_1, \cdots, \theta_{n-1}\right).
\]
Thereafter, the action of the creation operator gives the following:
\[
\left(a^\dagger(\theta)a\left(\theta - \frac{\pi i}3\right)\Psi_n\right)(\theta_1,\cdots, \theta_n)
= n\delta(\theta-\theta_1)\Psi_n\left(\theta - \frac{\pi i}3,\theta_2, \cdots, \theta_n\right)
\]
and after the integration and the multiplication by $-i\eta$, we get
\begin{align*}
&-i\eta n\int d\theta\, f^+\left(\theta + \frac{\pi i}3\right) a^\dagger (\theta)
a\left(\theta - \frac{\pi i}3\right)\Psi_n(\theta_1,\cdots, \theta_n) \\
=&\;-i\eta n\int d\theta\, f^+\left(\theta + \frac{\pi i}3\right) \delta(\theta-\theta_1)
\Psi_n\left(\theta - \frac{\pi i}3,\theta_2, \cdots, \theta_n\right) \\
=&\; -i\eta nf^+\left(\theta_1 + \frac{\pi i}3\right)\Psi_n\left(\theta_1 - \frac{\pi i}3,\theta_2, \cdots, \theta_n\right).
\end{align*}
If we look at the action on the first variable, this is exactly $n\chi_1(f)$.
Recall that $\chi_n(f) = nP_n(\chi_1(f)\otimes\cdots\otimes\1)P_n$.
As $\Psi_n$ was arbitrary, we get finally
\[
 \chi_n(f)\Psi_n = -i\eta \int d\theta\, f^+\left(\theta +\frac{\pi i}3\right) z^\dagger(\theta)z\left(\theta - \frac{\pi i}{3}\right)\Psi_n,
\]
which is the desired expression.
Similarly, we have
\[
 \chi'_n(g) = -i\eta \int d\theta\, g^+\left(\theta -\frac{\pi i}3\right) z'^\dagger(\theta)z'\left(\theta + \frac{\pi i}{3}\right).
\]

\subsection{The wedge-local fields}\label{field-scalar}
We define the (left) wedge-local field by
\[
 \fct(f) = \phi(f) + \chi(f).
\]
The domain of $\phi(f)$ includes that of $\chi(f)$, therefore the domain of $\fct(f)$
coincides with the latter.

The reflected field is given by
\[
 \fct'(g) := \phi'(g) + \chi'(g) = J\fct(g_j)J.
\]

\begin{proposition}
 Let $f$ be a real test function supported in $W_\L$. Then the operator $\fct(f)$ defined
 above has the following properties.
 \begin{enumerate}[{(}1{)}]
  \item $\fct(f)$ is a symmetric operator. \label{pr:fct:symmetry}
  \item $\fct$ is a solution of the Klein-Gordon equation, in the sense that
  $\fct((\Box + m^2)f) = 0$. \label{pr:fct:kg}
  \item $\fct(f)$ is covariant with respect to $U$, in the sense that if $f$ is supported in $W_\L$
  and if $(a,\lambda) \in \poincare$ such that $a \in W_\L$, then
  it holds that $U(g)\fct(f)U(g)^* \subset \fct(f_{(a,\lambda)})$. \label{pr:fct:covariance}
 \end{enumerate}
\end{proposition}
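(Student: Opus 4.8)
The plan is to read off all three properties by assembling the corresponding facts for Lechner's field $\phi(f)$, recalled in Section \ref{zf}, together with Propositions \ref{pr:chi:symmetry} and \ref{pr:chi:covariance} for the bound-state operator $\chi(f)$. The only point needing care is the bookkeeping of domains, since $\chi(f)$ is defined only on the proper subspace $\dom(\chi(f))\subset\D$, so that $\dom(\fct(f))=\dom(\chi(f))$, which is dense by Proposition \ref{pr:chi:symmetry}.

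For the symmetry claim I would use that $\phi(f)$ is essentially self-adjoint, hence symmetric, on $\D\supset\dom(\chi(f))$ for real $f$, and that $\chi(f)$ is symmetric by Proposition \ref{pr:chi:symmetry}; adding the two symmetry identities on the common dense domain $\dom(\chi(f))$ gives $\langle\fct(f)\xi,\eta\rangle=\langle\xi,\fct(f)\eta\rangle$, so $\fct(f)$ is densely defined and symmetric. For the Klein--Gordon claim, since $\phi((\Box+m^2)f)=0$ is among the listed properties of $\phi$, it is enough to see $\chi((\Box+m^2)f)=0$; note first that $(\Box+m^2)f$ is again a test function supported in $W_\L$, so the right-hand side makes sense. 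The Fourier transform $f\mapsto f^+$ intertwines $\Box+m^2$ with multiplication by $m^2-p(\zeta)\cdot p(\zeta)$, which vanishes identically because $p(\zeta)\cdot p(\zeta)=m^2(\cosh^2\zeta-\sinh^2\zeta)=m^2$ for every complex $\zeta$; hence $((\Box+m^2)f)^+\equiv 0$ on its strip of analyticity, so by \eqref{eq:chi1} $\chi_1((\Box+m^2)f)=0$, and therefore $\chi_n((\Box+m^2)f)=nP_n(\chi_1((\Box+m^2)f)\otimes\1\otimes\cdots\otimes\1)P_n=0$ for all $n$, giving $\chi((\Box+m^2)f)=0$ and $\fct((\Box+m^2)f)=0$. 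For the covariance claim, $\phi$ is Poincar\'e covariant on all of $\D$, with $U(a,\lambda)\phi(f)U(a,\lambda)^{*}=\phi(f_{(a,\lambda)})$, and $U(a,\lambda)$ preserves $\D$; meanwhile, for $a\in W_\L$, Proposition \ref{pr:chi:covariance} gives $\Ad U(a,\lambda)(\chi(f))\subset\chi(f_{(a,\lambda)})$, i.e.\ $U(a,\lambda)$ carries $\dom(\chi(f))$ into $\dom(\chi(f_{(a,\lambda)}))$ and agrees there with $\chi(f_{(a,\lambda)})$. Restricting both identities to $U(a,\lambda)\dom(\chi(f))=\dom\bigl(\Ad U(a,\lambda)(\fct(f))\bigr)$ and adding yields $U(a,\lambda)\fct(f)U(a,\lambda)^{*}\subset\fct(f_{(a,\lambda)})$.

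There is no genuine obstacle here: the statement is a direct assembly of results already in hand. The only points to watch are that the common domain $\dom(\chi(f))$ must be known to be dense before concluding that the sum of two symmetric operators is symmetric, and that covariance is only an operator inclusion and not an equality, because $\Ad U(a,\lambda)$ is a priori only known to map the analyticity domain of $\chi(f)$ into, rather than onto, that of $\chi(f_{(a,\lambda)})$.
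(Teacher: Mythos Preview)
Your proposal is correct and follows essentially the same approach as the paper: each of the three properties is obtained by combining the corresponding known property of $\phi(f)$ from \cite{Lechner03} with the analogous statement for $\chi(f)$ established in Propositions \ref{pr:chi:symmetry} and \ref{pr:chi:covariance}. Your treatment is in fact slightly more explicit about domains and about why $((\Box+m^2)f)^+=0$, but the logical structure is identical.
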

\begin{proof}
 (\ref{pr:fct:symmetry}) We saw in Proposition \ref{pr:chi:symmetry} that $\chi(f)$ is symmetric,
 and we know from \cite[Proposition 1(2)]{Lechner03} that $\phi(f)$ is symmetric.
 Therefore, the sum $\fct(f) = \phi(f) + \chi(f)$ is symmetric as well.
 
 (\ref{pr:fct:kg}) This follows from the facts that $((\Box + m^2)f)^+ = 0$, as in \cite[Proposition 1(3)]{Lechner03}
 and that $f$ appears only through $f^+$ in the definition of $\chi(f)$.
 
 (\ref{pr:fct:covariance}) We showed the covariance of $\chi(f)$ in Proposition \ref{pr:chi:covariance} and
 the covariance of $\phi(f)$ is shown in \cite[Proposition 1(4)]{Lechner03}, therefore
 the sum $\chi(f) = \phi(f) + \chi(f)$ is covariant, too. 
\end{proof}
Of course, similar properties hold for $\fct'(g)$.

Compared with \cite[Proposition 1]{Lechner03}, our fields are of a subtler nature.
The operator $\fct(f)$ does not preserve its domain (see Appendix \ref{domain}),
neither is essentially self-adjoint.

\subsubsection*{Weak commutativity}
Here we prove our main result.
As the domains of $\fct(f)$ and $\fct'(g)$ are subtle, we cannot always form the products
$\fct(f)\fct'(g)$ and $\fct'(g)\fct(f)$, and the commutator $[\fct(f),\fct'(g)]$.
Instead, we consider the weak form of commutation.
Consider the follogin linear space of vectors:
\[
\left\{
\begin{array}{c|l}
 & \textstyle{\prod_j S\left(\theta-\theta_j + \frac{\pi i}3\right)\Psi_n(\theta,\theta_1,\cdots,\theta_{n-1})} \mbox{ and}\\
 \Psi \in \dom(\fct(f))\cap\dom(\fct'(g)) &\textstyle{
 \prod_j S\left(\theta-\theta_j + \frac{2\pi i}3\right)\Psi_n(\theta,\theta_1,\cdots,\theta_{n-1})} \mbox{ have }L^2(\RR^{n-1})\mbox{-} \\
 &\textstyle{ \mbox{valued bounded analytic continuations in }\theta \mbox{ to } \theta\pm \epsilon i,}\\
 &\textstyle{ \mbox{for some } \epsilon > 0}
\end{array}
\right\}
\]
The intersection $\dom(\fct(f))\cap\dom(\fct'(g))$ has an interesting property.
The $n$-particle component $\Psi_n$ of a vector $\Psi$ has an analytic continuation in $\theta_1$ in the
negative imaginary part, but by $S$-symmetry, it can be analytically continued in any variable
(with possible poles at the poles of $S$). At the same time, it has analytic continuation in all the
variables in the positive imaginary part, as it is in $\dom(\fct'(g))$.
Therefore, $\Psi_n$ is not only the $L^2$ boundary value
of an analytic function, but it has a continuous value at $\theta_j\in \RR$.

Now, by \ref{fermionic}, $S(0) = -1$ and $\Psi_n$ is $S$-symmetric, therefore $\Psi_n(\theta_1,,\cdots, \theta_n)$
has a zero at $\theta_j -\theta_k =0$.
In the expression below,
\[
\prod_j S\left(\theta-\theta_j + \frac{\pi i}3\right)\Psi_n(\theta,\theta_1,\cdots,\theta_{n-1}),
\]
the $S$-factors have simple poles at $\theta - \theta_j = 0$, and they are cancelled
by the zeros of $\Psi_n(\theta,\theta_1,\cdots,\theta_n)$. Therefore, it still has pointwise meaning.
We conjecture that the additional condition that it has an $L^2$-bounded analytic continuation
in $\theta$ should be automatic, and the linear space above should be simply the intersection
$\dom(\chi(f))\cap\dom(\chi'(g))$.
Yet, as the domains are a subtle question, we content ourselves in the present work by
showing the weak commutativity on this space.
One can easily check that it is dense in $\H$
by considering the vectors constructed in Proposition \ref{pr:chi:symmetry}.

For vectors $\Phi,\Psi$ in the above space,
we show weak commutativity, i.e., $\<\fct(f)\Phi, \fct'(g)\Psi\> = \<\fct'(g)\Phi, \fct(f)\Psi\>$.
 From weak commutativity and some estimate of operators, strong commutativity can follow (e.g.\! \cite{DF77}).
Unfortunately, we have not yet been able to establish such estimates and in this work we restrict ourselves
to these weak relations. See also Appendix \ref{domain}.

\begin{theorem}\label{theo:commutator}
 Let $f$ and $g$ be real test functions supported in $W_\L$ and $W_\R$, respectively.
 Then, for each $\Phi, \Psi$ in the linear space above,
 it holds that
 \[
 \<\fct(f)\Phi, \fct'(g)\Psi\> = \<\fct'(g)\Phi, \fct(f)\Psi\>.
 \]
\end{theorem}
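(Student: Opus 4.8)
The plan is to expand the commutator $[\fct(f),\fct'(g)] = [\phi(f)+\chi(f),\,\phi'(g)+\chi'(g)]$ into its four pieces and show that the ``unwanted'' contributions cancel in pairs when evaluated weakly on the dense domain above. Concretely, I would first recall from \cite[Proposition 2]{Lechner03} (or \cite[Lemma 4.2.5]{Lechner:2006}) that $[\phi(f),\phi'(g)]$ on finite-particle vectors reduces, after the standard contour shift, to a sum of two terms: the ``free'' part $[\phi(f),\phi'(g)]_{\mathrm{free}}$ which vanishes because $\supp f\subset W_\L$, $\supp g\subset W_\R$ forces the relevant test-function product to be analytic and the contour can be closed; and a residue term picked up precisely at the pole $\zeta = \tfrac{2\pi i}{3}$ (and its crossing partner $\tfrac{\pi i}{3}$) of $S$, proportional to $R$. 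This residue term is the obstruction to wedge-locality of $\phi$ alone. The heart of the argument is then to compute $[\phi(f),\chi'(g)] + [\chi(f),\phi'(g)] + [\chi(f),\chi'(g)]$ in the weak sense on our domain and verify that, using $\eta^2 = -2\pi|R|$ and $R\in i\RR_+$, these exactly cancel the $\phi$-$\phi'$ residue. The formal Zamolodchikov-Faddeev expressions $\chi(f) \simeq -i\eta\int d\theta\, f^+(\theta+\tfrac{\pi i}{3})\,z^\dagger(\theta)z(\theta-\tfrac{\pi i}{3})$ and the analogous one for $\chi'(g)$ make the bookkeeping transparent: $[\chi(f),\chi'(g)]$ produces a $z^\dagger z$ term from the $zz^\dagger$ contraction whose $\delta$-function collapses one integration and evaluates $f^+$ and $g^+$ at arguments differing by $\tfrac{2\pi i}{3}$, i.e.\ precisely at the bound-state configuration; the mixed terms $[\phi(f),\chi'(g)]$, $[\chi(f),\phi'(g)]$ similarly generate one-particle-channel contributions.

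The steps, in order: (1) Fix $\Phi,\Psi$ in the stated linear space and write $\<\fct(f)\Phi,\fct'(g)\Psi\> - \<\fct'(g)\Phi,\fct(f)\Psi\>$ as a sum of four bracket differences corresponding to $\phi\phi'$, $\phi\chi'$, $\chi\phi'$, $\chi\chi'$; each is well-defined because on this domain all the relevant analytic continuations exist and remain $L^2$, so the individual products $\fct(f)\Phi$, $\fct'(g)\Psi$ etc.\ lie in $\H$. (2) For the $\phi\phi'$ term, reproduce Lechner's computation: split $\<\phi(f)\Phi,\phi'(g)\Psi\>$ into $z^\dagger/z$ pieces, use the commutation relations of $z,z'$ (from \cite[Lemma 4.2.5]{Lechner:2006}), and shift the rapidity contour by $i\pi$; the obstruction to closing it is the residue of $S$ at $\tfrac{2\pi i}{3}$, yielding an explicit one-particle-channel integral involving $R$, $f^+$ and $g^-$. (3) For the mixed and $\chi\chi'$ terms, use the explicit formulas \eqref{eq:chi1} and its primed analogue, together with the alternative expressions $\chi_n(f) = \sum_k D_n(\rho_k)(\chi_1(f)\otimes\1\otimes\cdots)P_n$, to compute the matrix elements; here the key analytic fact is that on our domain $\Psi_n$ has genuine boundary values at $\RR$ in each variable (noted just before the theorem), so shifting $\theta$ by $\pm\tfrac{\pi i}{3}$ inside the integrals is legitimate, the $S$-factor poles being cancelled by the $S(0)=-1$ zeros of $\Psi_n$. (4) Collect all residue/contraction terms and check cancellation: this is where $\eta = i\sqrt{2\pi|R|}$, hence $-i\eta\cdot(-i\eta) = -\eta^2 = 2\pi|R| = -2\pi i R$ (as $R\in i\RR_+$), must precisely match the prefactor coming from the $\phi\phi'$ residue and the crossing relation $R' = -R$; the bootstrap equation \ref{bootstrap} enters to reconcile the $S$-factor products appearing in front of the $\chi$-contributions with those from the $\phi$-residue.

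The main obstacle I anticipate is step (4): getting the combinatorial and analytic factors to match exactly. Several things must conspire — the number of $S$-factors $\prod_{j} S(\theta-\theta_j + \tfrac{\pi i}{3})$ attached to each $\rho_k$ term, the crossing identity $S(\theta)=S(\pi i - \theta)$ relating the $\tfrac{\pi i}{3}$ and $\tfrac{2\pi i}{3}$ poles, the bootstrap equation, and the precise normalization of $\eta$ relative to $R$. One has to be careful that the contour shift in the $\phi\phi'$ term and the ``built-in'' shifts in $\chi$, $\chi'$ land on the \emph{same} analytic configuration of the $n$-particle wave function, so that the cancellation is pointwise in the remaining $n-1$ rapidities before integration; this is exactly why the domain in the theorem is defined by demanding bounded analytic continuations of $\prod_j S(\theta-\theta_j+\tfrac{\pi i}{3})\Psi_n$ and $\prod_j S(\theta-\theta_j+\tfrac{2\pi i}{3})\Psi_n$ rather than just $\Psi_n$ itself. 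A secondary technical point is justifying the contour shifts and Fubini in the presence of the (cancelled) poles — one handles this by the regularization $C_n(\pmb\theta)$ from the proof of Proposition \ref{pr:chi:symmetry}, or by first proving the identity on the dense set of wave functions with compact Fourier support and then passing to the limit.
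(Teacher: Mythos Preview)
Your approach is correct and essentially identical to the paper's: decompose into the four pieces $[\phi,\phi']$, $[\phi,\chi']$, $[\chi,\phi']$, $[\chi,\chi']$, compute each weakly using the explicit $\rho_k$-expansion of $\chi_n$, and match residues via $\eta^2=-2\pi|R|$, crossing, and bootstrap. One organizational point will sharpen your step (4): in the paper the mixed terms actually cancel \emph{each other} (split $\phi=z^\dagger+z$, $\phi'=z'^\dagger+z'$; then $[\chi(f),z'(J_1g^-)]$ cancels $[z(J_1f^-),\chi'(g)]$ after a $\tfrac{\pi i}{3}$ contour shift and one use of bootstrap plus crossing, and the creator pair cancels by taking adjoints), so the entire residue of $[\phi,\phi']$ is absorbed by $[\chi,\chi']$ alone --- and for that last piece you must separate the double sum $\sum_{j,k}$ into ``off-diagonal'' terms ($k\neq n-j+1$, where $\chi_1(f)$ and $\chi'_1(g)$ act on different tensor slots and Lemma~\ref{lm:abstractshift} lets you swap them, so the two orderings cancel) and ``diagonal'' terms ($k=n-j+1$, same slot), the latter giving exactly the two $S$-products $\prod_{l\neq k}S(\theta_k-\theta_l+\tfrac{2\pi i}{3})$ and $\prod_{l\neq k}S(\theta_k-\theta_l+\tfrac{\pi i}{3})$ that match the two poles in the $\phi\phi'$ residue.
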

\begin{proof}
 Recall that the vectors $\Phi, \Psi$ have finitely many non-zero components and our operators acts
 componentwise:
 \begin{align*}
  \fct(f) &= \phi(f) + \chi(f) = z^\dagger(f^+) + \chi(f) + z(J_1f^-), \\
  \fct'(g) &= \phi'(g) + \chi'(g) = z'^\dagger(g^+) + \chi'(g) + z'(J_1g^-).
 \end{align*}
 In this proof, it should be kept in mind that $\Phi$ and $\Psi$ are already $S$-symmetric.
 
 Let us compute the commutator $[ \fct(f), \fct'(g)]$, which by the above expands into several terms that we will compute individually.
 \begin{flushleft} 
 {\it The commutator $[\chi(f), z'(J_1g^-)]$}
 \end{flushleft}
 We can actually compute this commutator in the operator (not the weak) form,
 as we do not encounter the problem of domains. Recall that, if $g$ is real,
 $J_1g^- = g^+$.
 
 Let us look at the expression we derived early in Section \ref{chi},
 in which $\chi_n(f)$ is written as a sum of $n$ terms.
 Its first $n-1$ terms act in the first $n-1$ variables in the same way as
 $n-1$ terms in $\chi_{n-1}$ do. On the other hand, $z'(g^+)$ acts only
 on the $n$-th variable $\theta_n$.
 Therefore, the only remaining term in this commutator is the one
 which comes from the $n$-th term in the expression, which is
  \begin{align*}
  & ([\chi(f),z'(J_1g^-)]\Psi_n)(\theta_1,\cdots,\theta_{n-1}) \\
  &= i\eta\sqrt n\int d\theta\, \overline{g^+(\theta)}\prod_{1\le j \le n-1} S\left(\theta-\theta_j + \frac{\pi i}3\right)
 f^+\left(\theta + \frac{\pi i}3\right) \Psi_n\left(\theta_1,\cdots,\theta_{n-1},\theta - \frac{\pi i}3\right).
 \end{align*}
 For the later convenience, we further rewrite this using the $S$-symmetry of $\Psi_n$:
 \begin{align*}
  & ([\chi(f),z'(J_1g^-)]\Psi_n)(\theta_1,\cdots,\theta_{n-1}) \\
  = i\eta\sqrt n \int & d\theta\, \overline{g^+(\theta)}\prod_{1\le j \le n-1} S\left(\theta - \theta_j + \frac{\pi i}3\right)
 f^+\left(\theta + \frac{\pi i}3\right) \\
 & \;\;\;\times S\left(\theta - \theta_j - \frac{\pi i}3\right)\Psi_n\left(\theta - \frac{\pi i}3,\theta_1,\cdots,\theta_{n-1}\right) \\
  = i\eta\sqrt n\int & d\theta\, \overline{g^+(\theta)}\prod_{1\le j \le n-1} S(\theta - \theta_j) 
 f^+\left(\theta + \frac{\pi i}3\right) \Psi_n\left(\theta - \frac{\pi i}3, \theta_1,\cdots,\theta_{n-1}\right) \\
  = i\eta\sqrt n\int & d\theta\, \overline{g^+\left(\theta - \frac{\pi i}3\right)}\prod_{1\le j \le n-1} S\left(\theta - \theta_j + \frac{\pi i}3\right) 
 f^+\left(\theta + \frac{2\pi i}3\right) \Psi_n\left(\theta, \theta_1,\cdots,\theta_{n-1}\right) \\
  =i\eta\sqrt n\int & d\theta\, g^+\left(\theta - \frac{2\pi i}3\right)\prod_{1\le j \le n-1} S\left(\theta - \theta_j + \frac{\pi i}3\right) 
 f^+\left(\theta + \frac{2\pi i}3\right) \Psi_n\left(\theta, \theta_1,\cdots,\theta_{n-1}\right),
 \end{align*}
 where in the second equality we used the bootstrap equation, in the third equality the assumed property of $\Psi$ explained before this Theorem \ref{theo:commutator}
 and Lemma \ref{lm:l2shift}, and the last equality follows because $g$ is real.

 \begin{flushleft} 
 {\it The commutator $[z(J_1f^-), \chi'(g)]$}
 \end{flushleft}
 This can be computed in a similar way as before. By paying attention that $z$ reduces the number of variables and
 shifts the indices of the remaining ones, the result is:
 \begin{align*}
  & [(z(J_1f^-), \chi'(g)]\Psi_n)(\theta_1,\cdots,\theta_{n-1}) \\
  =\; -i\eta\sqrt n\int &d\theta\, \overline{f^+(\theta)}\prod_{1\le j \le n-1} S\left(\theta_j-\theta + \frac{\pi i}3\right)
 g^+\left(\theta - \frac{\pi i}3\right) \Psi_n\left(\theta + \frac{\pi i}3,\theta_1,\cdots,\theta_{n-1}\right) \\
  =\; -i\eta\sqrt n\int &d\theta\, \overline{f^+\left(\theta + \frac{\pi i}3\right)}\prod_{1\le j \le n-1} S\left(\theta_j-\theta + \frac{2\pi i}3\right) 
 g^+\left(\theta - \frac{2\pi i}3\right) \Psi_n\left(\theta,\theta_1,\cdots,\theta_{n-1}\right) \\
  =\; -i\eta\sqrt n\int &d\theta\, f^+\left(\theta + \frac{2\pi i}3\right)\prod_{1\le j \le n-1} S\left(\theta-\theta_j + \frac{\pi i}3\right)
 g^+\left(\theta - \frac{2\pi i}3\right) \Psi_n\left(\theta,\theta_1,\cdots,\theta_{n-1}\right),
 \end{align*}
 where we used the assumed domain property of $\Psi$, Lemma \ref{lm:l2shift},
 the crossing symmetry of $S$, and that $f$ is real.
 This coincides with the result of the commutator $[\chi(f), z'(J_1g^-)]$ up to a sign,
 therefore they cancel each other.
 
 \begin{flushleft} 
 {\it The commutators $[z^\dagger(f^+), \chi'(g)]$ and $[\chi(f), z'^\dagger(g^+)]$}
 \end{flushleft}
 One can show that these two commutators cancel each other by taking adjoints and repeat the
 computation as in the commutators before.
 More precisely, before we have computed weak commutators such as $\<K\Psi, Z\Phi\> - \<Z^*\Psi, K\Phi\>$,
 where  $K$ corresponds to either $\chi(f)$ or $\chi'(g)$ and $Z$ is the annihilation operators
 $z(J_1f^-)$ or $z'(J_1g^-)$.
 Here we must compute numbers such as $\<K\Psi, Z^*\Phi\> - \<Z\Psi, K\Phi\>$,
 which is complex conjugate to the previous one.

 \begin{flushleft} 
 {\it The commutator $[\phi(f), \phi'(g)]$}
 \end{flushleft}
 This part has been essentially done in \cite[P.13]{Lechner03}. The difference is that
 $S$ has two poles in the physical strip, therefore we obtain residues when we shift the integration contour.
 This commutator preserves the particle number, therefore, it suffices to compute its action
 at fixed $n$.
 Let us take $\Psi, \Phi$ with only $n$ particle components.
 By considering that the poles of $S$ are simple, the result is (note that we switched $f$ and $g$ from \cite{Lechner03})
 \begin{align*}
 & ([\phi(f), \phi'(g)]\Psi_n)(\theta_1,\cdots,\theta_n)\\
 =& -\int d\theta\, \left(g^-(\theta)f^+(\theta)\prod_{j=1}^n S(\theta-\theta_j) - g^-(\theta + \pi i)f^+(\theta + \pi i)\prod_{j=1}^n S(\theta-\theta_j + \pi i) \right)\\
 & \;\;\;\times \Psi_n(\theta_1,\cdots,\theta_n) \\
 =-2\pi i& \Bigg(\sum_{k = 1}^n Rg^-\left(\theta_k + \frac{2\pi i}3\right)f^+\left(\theta_k + \frac{2\pi i}3\right)
  \prod_{j\neq k} S\left(\theta_k-\theta_j + \frac{2\pi i}3\right) \\
 & - \sum_{k = 1}^n Rg^-\left(\theta_k + \frac{\pi i}3\right)f^+\left(\theta_k + \frac{\pi i}3\right)
  \prod_{j\neq k} S\left(\theta_k-\theta_j + \frac{\pi i}3\right)\Bigg) \times \Psi_n(\theta_1,\cdots,\theta_n) \\
 =-2\pi i& \Bigg(\sum_{k = 1}^n Rg^+\left(\theta_k - \frac{\pi i}3\right)f^+\left(\theta_k + \frac{2\pi i}3\right)
  \prod_{j\neq k} S\left(\theta_k-\theta_j + \frac{2\pi i}3\right) \\
 & - \sum_{k = 1}^n Rg^+\left(\theta_k - \frac{2\pi i}3\right)f^+\left(\theta_k + \frac{\pi i}3\right)
  \prod_{j\neq k} S\left(\theta_k-\theta_j + \frac{\pi i}3\right)\Bigg) \times \Psi_n(\theta_1,\cdots,\theta_n),
 \end{align*}
 where we used that the residue of $S$ at $\frac{\pi i}3$ is $-R$, which follows from the crossing symmetry.
 In order to justify the second equality, we should note that if any pair of $\theta_k$'s does not coincide,
 then the integrand has a simple pole and we have the equality. The complement of such $\pmb{\theta}$'s has Lebesgue measure zero, therefore we obtain the equality. Actually, although
 this expression looks unbounded as a function of $\theta_k$ because of the poles of $S$, it is actually bounded since these poles at $\theta_k - \theta_j = 0$
cancel each other. (This is not surprising, since
 $\phi(f)$ and $\phi'(g)$ are bounded on $\H_n$.)

 \begin{flushleft} 
 {\it The commutator $[\chi(f), \chi'(g)]$}
 \end{flushleft}
 Now we come to the most important part of the computations. As both $\chi(f)$ and $\chi'(g)$
 preserves each $n$-particle space, we assume again that $\Psi, \Phi$ have only $n$-particle components.
 
 Let us recall the expressions of $\chi(f), \chi'(g)$ we derived in Section \ref{chi}:
 \begin{equation*}
 \begin{aligned}
  \chi_n(f) &= \sum_{1\le k \le n} D_n(\rho_k)(\chi_1(f)\otimes\1\otimes\cdots\otimes\1)P_n, \\
  \chi'_n(f) &= \sum_{1\le j \le n} D_n(\rho'_j)(\1\otimes\cdots\otimes\1\otimes\chi'_1(g))P_n.
 \end{aligned}
 \end{equation*}
 Therefore, the term $\<\chi'(g)\Phi, \chi(f)\Psi\>$ of the commutator is
 \[
   \sum_{j,k}\<D_n(\rho'_j)(\1\otimes\cdots\otimes\1\otimes\chi'_1(g))\Phi, D_n(\rho_k)(\chi_1(f)\otimes\1\otimes\cdots\otimes\1)\Psi\>.
 \]
 Actually, $\rho_k$ here can be replaced by any permutation $\sigma$ such that
 $\sigma(1) = k$. Let us choose the transposition $\tau_{1,k}: (1,k) \mapsto (k,1)$.
 $\tau_{1,1}$ coincides with the unit element of $\mathfrak{S}_n$.
 Similarly, we can use the transposition $\tau_{n-j+1,n}$ instead of $\rho'_j$. Then it is clear that $\tau_{1,k}$
 and $\tau_{n-j+1,n}$ commute unless $k=n$ or $j=n$ or $k=n-j+1$.
 For pairs of $k$ and $j$ such that the two transpositions commute, the scalar product reduces to the following:
 \begin{equation}
 \begin{aligned}\label{tautwo}
  &\<D_n(\tau_{n-j+1,n})(\1\otimes\cdots\otimes\1\otimes\chi'_1(g))\Phi, D_n(\tau_{1,k})(\chi_1(f)\otimes\1\otimes\cdots\otimes\1)\Psi\> \\
  & = \<(\1\otimes\cdots\otimes\1\otimes\chi'_1(g))\Phi, (\chi_1(f)\otimes\1\otimes\cdots\otimes\1)\Psi\> \\
  & = \<(\chi_1(f)\otimes\1\otimes\cdots\otimes\1)\Phi, (\1\otimes\cdots\otimes\1\otimes\chi'_1(g))\Psi\>,
 \end{aligned}
 \end{equation}
 where the last equality follows by Lemma \ref{lm:abstractshift} (by writing $\chi_1(f) = \sqrt{2\pi|R|}x_f\Delta_1^\frac16$, etc.).
 If $k=n, j\neq 1,n$ (respectively $k\neq 1,n$ and $j=n$), one has $\tau_{n-j+1,n}\tau_{1,n} = \tau_{1,n-j+1}\tau_{n-j+1,n}$
 (resp. $\tau_{1,n}\tau_{1,k}= \tau_{1,k}\tau_{n,k}$), and we can reduce these contributions to the same value \eqref{tautwo}. 
 The case $k=j=n$ is also easy and gives the same contribution. We will see that they get cancelled by the term $\<\chi(f) \Phi, \chi'(g)\Psi\>$ of the commutator $[\chi(f),\chi'(g)]$.

 The remaining terms are those with $k=n-j+1$. Here, $\chi_1(f)$ and $\chi'_1(g)$ act on the same variable:
 \begin{align*}
  &\<D_n(\rho'_{n-k+1})(\1\otimes\cdots\otimes\1\otimes\chi'_1(g))\Phi, D_n(\rho_k)(\chi_1(f)\otimes\1\otimes\cdots\otimes\1)\Psi\> \\
  = 2\pi|R|\int d\theta_1\cdots d\theta_n\, &\prod_{l=1}^{k-1} S\left(\theta_k - \theta_l + \frac{\pi i}3\right) f^+\left(\theta_k + \frac{\pi i}3\right)\Psi_n\left(\theta_1,\cdots,\theta_k-\frac{\pi i}3,\cdots,\theta_n\right) \\
 \times &\overline{\prod_{l=k+1}^{n} S\left(\theta_l - \theta_k + \frac{\pi i}3\right) g^+\left(\theta_k - \frac{\pi i}3\right)\Phi_n\left(\theta_1,\cdots,\theta_k+\frac{\pi i}3,\cdots,\theta_n\right)} \\
  = -2\pi iR\int d\theta_1\cdots d\theta_n\, &\prod_{l=1}^{k-1} S\left(\theta_k - \theta_l + \frac{2\pi i}3\right) f^+\left(\theta_k + \frac{2\pi i}3\right)\Psi_n\left(\theta_1,\cdots,\theta_n\right) \\
  &\times \overline{\prod_{l=k+1}^{n} S\left(\theta_l - \theta_k + \frac{2\pi i}3\right) g^+\left(\theta_k - \frac{2\pi i}3\right)\Phi_n\left(\theta_1,\cdots,\theta_n\right)} \\
  = -2\pi iR\int d\theta_1\cdots d\theta_n\, &\prod_{l=1}^{k-1} S\left(\theta_k - \theta_l + \frac{2\pi i}3\right) f^+\left(\theta_k + \frac{2\pi i}3\right)\Psi_n\left(\theta_1,\cdots,\theta_n\right) \\
 & \times \prod_{l=k+1}^{n} S\left(\theta_k - \theta_l + \frac{2\pi i}3\right) g^+\left(\theta_k - \frac{\pi i}3\right)\overline{\Phi_n\left(\theta_1,\cdots,\theta_n\right)} \\
  = -2\pi iR\int d\theta_1\cdots d\theta_n\, &\prod_{l\neq k} S\left(\theta_k - \theta_l + \frac{2\pi i}3\right) f^+\left(\theta_k + \frac{2\pi i}3\right)\Psi_n\left(\theta_1,\cdots,\theta_n\right)
    g^+\left(\theta_k - \frac{\pi i}3\right) \\
 &\overline{\Phi_n\left(\theta_1,\cdots,\theta_n\right)},
 \end{align*}
 where we used the assumption that the product of $\Psi, \Phi$ and the $S$ factor remains $L^2$, Lemma \ref{lm:l2shift}
 and the condition \ref{poles-boundedness} which implies that $R = i|R|$. 
 
 Note that this is the second term in the commutator $[\chi(f), \chi'(g)]$, therefore it gets another minus sign.
 The expression so obtained is equal to the first contribution from the commutator $[\phi(f),\phi'(g)]$ with reversed sign and
 coupling with $\Phi$, therefore, they cancel each other.

 Let us examine the term $\<\chi(f) \Phi, \chi'(g)\Psi\>$ in the commutator $[\chi(f),\chi'(g)]$.
 The computation is similar to the previous case and we obtain the following.
 As before, for pairs of $k$ and $j$ such that $k \neq n-j+1$, one has the contributions:
 \begin{align*}
  &\<D_n(\tau_{n-j+1,n})(\1\otimes\cdots\otimes\1\otimes\chi'_1(g))\Phi, D_n(\tau_{1,k})(\chi_1(f)\otimes\1\otimes\cdots\otimes\1)\Psi\> \\
  & = \<(\1\otimes\cdots\otimes\1\otimes\chi'_1(g))\Phi, (\chi_1(f)\otimes\1\otimes\cdots\otimes\1)\Psi\> \\
  & = \<(\chi_1(f)\otimes\1\otimes\cdots\otimes\1)\Phi, (\1\otimes\cdots\otimes\1\otimes\chi'_1(g))\Psi\>,
 \end{align*}
 and this cancels the contribution in Equation \eqref{tautwo}.
 
 The remaining terms in $\<\chi(f) \Phi, \chi'(g)\Psi\>$ are:
 \begin{align*}
  &\<D_n(\rho_k)(\chi_1(f)\otimes\1\otimes\cdots\otimes\1)\Phi, D_n(\rho'_{n-k+1})(\1\otimes\cdots\otimes\1\otimes\chi'_1(g))\Psi\> \\
  = -2\pi iR\int d\theta_1\cdots d\theta_n\, &\prod_{l=k+1}^{n} S\left(\theta_l - \theta_k + \frac{\pi i}3\right) g^+\left(\theta_k - \frac{\pi i}3\right)\Psi_n\left(\theta_1,\cdots,\theta_k+\frac{\pi i}3,\cdots,\theta_n\right) \\
 & \times \prod_{l=1}^{k-1} \overline{S\left(\theta_k - \theta_l + \frac{\pi i}3\right) f^+\left(\theta_k + \frac{\pi i}3\right)\Phi_n\left(\theta_1,\cdots,\theta_k-\frac{\pi i}3,\cdots,\theta_n\right)} \\
  = -2\pi iR\int d\theta_1\cdots d\theta_n\, &\prod_{l=k+1}^{n} S\left(\theta_l - \theta_k + \frac{2\pi i}3\right) g^+\left(\theta_k - \frac{2\pi i}3\right)\Psi_n\left(\theta_1,\cdots,\theta_k,\cdots,\theta_n\right) \\
 & \times \prod_{l=1}^{k-1} \overline{S\left(\theta_k - \theta_l + \frac{2\pi i}3\right) f^+\left(\theta_k + \frac{2\pi i}3\right)\Phi_n\left(\theta_1,\cdots,\theta_k,\cdots,\theta_n\right)} \\
  = -2\pi iR\int d\theta_1\cdots d\theta_n\, &\prod_{l=k+1}^{n} S\left(\theta_l - \theta_k + \frac{2\pi i}3\right) g^+\left(\theta_k - \frac{2\pi i}3\right)\Psi_n\left(\theta_1,\cdots,\theta_n\right) \\
 & \times \prod_{l=1}^{k-1} S\left(\theta_l - \theta_k + \frac{2\pi i}3\right) f^+\left(\theta_k + \frac{\pi i}3\right)\overline{\Phi_n\left(\theta_1,\cdots,\theta_n\right)} \\
  = -2\pi iR\int d\theta_1\cdots d\theta_n\, &\prod_{l\neq k}^{n} S\left(\theta_k - \theta_l + \frac{\pi i}3\right) g^+\left(\theta_k - \frac{2\pi i}3\right)\Psi_n\left(\theta_1,\cdots,\theta_n\right)
  f^+\left(\theta_k + \frac{\pi i}3\right) \\
 &\times \overline{\Phi_n\left(\theta_1,\cdots,\theta_n\right)}.
 \end{align*}
The last expression is equal to the second contribution from the weak form of the
commutator $\<\Phi, [\phi(f),\phi'(g)]\Psi\>$ up to the sign, therefore, they cancel each other. 
 
Altogether, we have seen that all the terms in $[\fct(f),\fct'(g)]$ cancel.
\end{proof}

\subsubsection*{Reeh-Schlieder property}
As the fields $\fct(f)$ and $\fct'(g)$ do not preserve their domains, especially one cannot iterate them on the vacuum
more than once (see Appendix \ref{domain}), the Reeh-Schlieder property does not hold for polynomials of these fields.
Instead, if we assume the existence of nice self-adjoint extensions, 
we can argue as below. But as we do not have such extensions, we refrain from stating it as a theorem.

Let us suppose that for each $g$ there is a self-adjoint extension of $\fct'(g)$, which we denote by the same symbol,
such that $\fct'$ is covariant with respect to $U$.
Suppose also that, for each $f$, $\fct(f)$ has a nice self-adjoint extension, such that
$\fct'(g)$ and $\fct(f)$ strongly commute.
We consider the von Neumann algebra
\[
 \M = \{e^{i\fct'(g)}: \supp g \subset W_\R\}'',
\]
and we have to show that $\overline{\M\Omega} = \H$.
Actually, as $\M$ is an algebra of bounded operators containing the identity operator $\1$,
we can freely use the fact that $\overline{\M\Omega} = \M\overline{\M\Omega}$.

Take first the one-particle space. We have $\fct'(g)\Omega = g^+ \in \H_1$ and this is
in the above closure because $\fct'(g)\Omega = \frac{d}{dt}e^{it\fct'(g)}\Omega$ and
$\Omega$ is in the domain of $\fct(f)$ \cite[Theorem VIII.7]{RSII}.
By the one-particle Reeh-Schlieder property (e.g. \cite[Theorem 3.2.1]{Longo08}),
it follows that $(\CC\Omega \oplus \H_1) \subset \overline{\M\Omega}$.

Note that $\H_1 \cap \dom(\fct'(g))$, which includes $\H_1 \cap \dom(\chi'(g))$,
has a dense subspace. For any such vector $\xi \in \H_1 \cap \dom(\chi'(g))$, it holds
$\fct'(g)\xi = \frac{d}{dt}e^{it\fct'(g)}\xi \in (\CC\Omega \oplus \H_1 \oplus \H_2)$
and their projection to $\H_2$ is dense in $\H_2$ (as it comes from the action of $z^\dagger$).
We can subtract the $(\CC\Omega \oplus \H_1)$-component since $\M$ is an algebra,
and obtain that $\H_2$-component belongs to $\M\overline{\M\Omega} = \overline{\M\Omega}$.
Therefore, it follows that $(\CC\Omega \oplus \H_1 \oplus \H_2) \subset \overline{\M\Omega}$.

The rest is shown by induction: Assume that $(\CC\Omega \oplus \cdots \oplus \H_n) \subset \overline{\M\Omega}$.
By differentiation, we obtain $\fct'(g)\Psi = \frac{d}{dt}e^{it\fct'(g)}\Psi \in (\CC\Omega \oplus \cdots \oplus \H_{n+1})$
and we can extract the $\H_{n+1}$-component. Such $\H_{n+1}$-components are obtained by $z^\dagger$, thus
form a dense subspace in $\H_{n+1}$. Namely, we showed that
$(\CC\Omega \oplus \cdots \oplus \H_{n+1}) \subset \overline{\M\Omega}$,
which completes the induction.

\subsubsection*{(Non-)temperateness of the fields}
Let us assume that we have a Haag-Kastler net.
Borchers, Buchholz and Schroer \cite{BBS01} called an operator $G$ which is affiliated to the wedge-algebra and
generates a one-particle state from the vacuum a {\bf polarization-free generator}.
A polarization-free generator is said to be {\bf temperate}, if there is a translation-invariant
dense domain of $G$ such that, for any vector $\Psi$ in that domain,
$a\mapsto GU(a,0)\Psi$ is strongly continuous and polynomially bounded.
The existence of temperate polarization-free generators restrict drastically the possibility of interaction.

We argue that the closure of our field $\fct(f)$ (which we denote by the same symbol)
has no such polynomially growing vector.
Indeed, let $\Psi$ be a vector in the domain of the closure of $\fct(f)$.
Even if $\Psi$ is only in the domain of the closure of $\fct(f) = \phi(f) + \chi(f)$, the operator $\phi(f)$
is continuous on each space of fixed particle number $\H_n$, thus each $n$-particle component $\Psi_n$ of
$\Psi$ is in the domain of $\chi_n(f)$,
and $\fct(f)$ can be computed componentwise.
If we look at the one-particle component of $\chi(f)U(a,0)\Psi$, we get
\[
 (\chi_1(f)U_1(a,0)\Psi_1)(\theta) = \sqrt{2\pi |R|}f^+\left(\theta + \frac{\pi i}3\right)e^{ia\cdot p\left(\theta - \frac{\pi i}3\right)}
 \Psi_1\left(\theta - \frac{\pi i}3\right).
\]
If $a$ tends to the negative spacelike direction, this grows exponentially.
As we remarked, the contribution from $\phi(f)$ is bounded, hence this shows that $\fct(f)U(a,0)\Psi$ is
not polynomially bounded.

\section{The form factor program and polarization-free generators}\label{sec:formfactor}
In this section, we discuss the connection of our approach to the form factor program \cite{Smirnov92, BK04}.
We will also present how to derive our new term $\chi(f)$ under certain assumptions.

Both in the form factor program and in the operator-algebraic approach,
we have to construct local observables. In the form factor program,
the matrix elements of local operators are considered and they are subject to
several conditions, the form factor axioms, and one has to find solutions (form factors)
of these conditions. On the other hand, in the operator-algebraic approach,
it has been proved \cite{BBS01} that, in any model, there are polarization-free generators, namely
operators which are localized in a wedge-region and generate one-particle states from the
vacuum. Therefore, if a Haag-Kastler net exists for an integrable QFT, there must be polarization-free generators.
As they are localized in a wedge, they must commute with operators localized in the
causal complement of the wedge, especially with local operators which come from form factors.
Here we perform such formal computations.

\subsection{Compatibility with the form factor program}\label{compatibility}
\subsubsection*{Operator expansion for analytic S-matrix}
In \cite{BC14}, we investigated the structure of strictly local observables in integrable quantum field theories in $1+1$ dimensions in the case where the theory has only one species of massive scalar particle and the two-particle scattering function is analytic in the physical strip.

Specifically, we extracted more information on the properties of these local observables $A$ by expanding them into a series of normal-ordered strings of $z^\dagger,z$,
\begin{equation}\label{exp}
A= \sum_{m,n =0}^\infty \int \frac{d\pmb{\theta}d\pmb{\eta}}{m!n!} f_{m,n}^{[A]}(\pmb{\theta},\pmb{\eta})z^\dagger(\theta_1)\cdots  z^\dagger(\theta_m)z(\eta_1)\cdots z(\eta_n).
\end{equation}
It was shown in \cite{BC13} that this expansion holds for every operator or quadratic form $A$ in a certain regularity class and it is independent of the localization region of $A$. It is similar to the well-known form factor expansion, although it is not identical to it; for a pointlike field $A$, the definition of
the coefficients $f_{m,n}^{[A]}(\pmb{\theta},\pmb{\eta})$ formally agree with the form factors for certain regions of the arguments. More specifically,
following \cite[Section 3]{BK04}\cite[p.11]{Quella99} we have that
\begin{equation*}
f_{0,n}^{[A]}(\pmb{\theta}) = \langle \Omega, A z^\dagger(\theta_n) \ldots z^\dagger(\theta_1) \Omega \rangle = F^A(\theta_1, \ldots, \theta_n)
\end{equation*}
for $\theta_n > \ldots > \theta_1$ and where $F^A$ denotes the form factor of $A$. For the other coefficients $f_{m,n}^{[A]}(\pmb{\theta},\pmb{\eta})$, there is an explicit expression in term of vacuum expectation values of $A$ \cite{BC13}, and it was shown in \cite{BC14} that if $A$ is a general observable localized in a bounded region, then the $f_{m,n}^{[A]}$ are boundary values of a common meromorphic function, i.e., one has
\begin{equation*}
f_{m,n}^{[A]}(\pmb{\theta},\pmb{\eta}) = F_{m+n}(\pmb{\theta}+i\pmb{0}, \pmb{\eta} +i\pmb{\pi}-i\pmb{0}).
\end{equation*}
Choosing the bounded region to be a standard double cone of radius $r>0$ around the origin, these $F_k(\pmb{\zeta})$ ($\pmb{\zeta} \in \mathbb{C}^k$) fulfill the following properties, which we write informally here (the exact statement of these properties can be found in \cite{BC14}.)
\begin{enumerate}
\renewcommand{\theenumi}{(F\arabic{enumi})}
\renewcommand{\labelenumi}{\theenumi}
\item \label{mero} They are meromorphic on $\mathbb{C}^k$. In particular, except for possible first-order poles at $\zeta_n -\zeta_m =i\pi$ (``kinematic poles''), they are analytic on the tube region $\operatorname{Im} \zeta_1 < \ldots < \operatorname{Im} \zeta_k < \operatorname{Im} \zeta_1 +2\pi$.

\item \label{sym} They are $S$-symmetric, that is, for all complex arguments $\zeta_1, \ldots, \zeta_k$, the following relation holds:
\begin{equation*}
F_k(\zeta_1, \ldots, \zeta_{j+1},\zeta_j, \ldots, \zeta_k) = S(\zeta_j -\zeta_{j+1}) F_k(\zeta_1, \ldots, \zeta_j, \zeta_{j+1}, \ldots, \zeta_k).
\end{equation*}

\item \label{period} They are $S$-periodic, i.e.,
\begin{equation*}
F_k(\zeta_1, \ldots, \zeta_{k-1}, \zeta_k +2i\pi) = F_k(\zeta_k, \zeta_1, \ldots, \zeta_{k-1})
\end{equation*}
(a similar property holds also in the other variables.)

\item \label{res} The value of their residue at $\zeta_k -\zeta_1= i\pi$ is given by
\begin{equation*}
\operatorname*{res}_{\zeta_k -\zeta_1 = i\pi}F_k(\pmb{\zeta}) = \frac{1}{2\pi i} \Big( 1- \prod_{p=1}^k S(\zeta_1 -\zeta_p)\Big) F_{k-2}(\zeta_2, \ldots, \zeta_{k-1}) 
\end{equation*}
(and a similar formula holds for the residues at the other kinematic poles.)

The remaining properties concern the bounds that these functions fulfill, namely
\item \label{bouedge}  $\pmb{\theta} \rightarrow F_k(\pmb{\theta}+i\pmb{0})$ is square integrable.

\item \label{bouinter}At real infinity the growth behaviour of $F_k$ is essentially given by
\begin{equation*}
\lvert F(\pmb{\theta} +i\pmb{\lambda})\rvert \sim \prod_{j=1}^k e^{m r \cosh \theta_j |\sin \lambda_j|}.
\end{equation*}
\end{enumerate}

These relations between form factors and local observables can be made precise and suitable variations of \ref{mero}--\ref{bouinter} holds for  $F_k$ \emph{if and only if} $A$ is localized in a double cone (see \cite[Theorem  5.4]{BC14}).

The functions $F_k$ have a rich pole structure: Apart from the so-called kinematic poles, they have further singularities on hyperplanes in $\mathbb{C}^k$, which are due to the poles of the scattering function ``outside the physical strip''.

\subsubsection*{Cases with poles in the physical strip}
In the present paper, the scattering function has additional poles in the physical strip. We will therefore expect that properties \ref{mero}--\ref{bouinter} need to be modified to take into account these extra poles of the S-matrix. We call these new properties (P1)--(P6). 
In particular, for (P1) we demand that the functions $F_k$ are meromorphic with the pole structure as in \ref{mero} and additional poles at $\zeta_n -\zeta_m = \frac{2\pi i}{3}$ with $n > m$. The presence of these additional poles, which are a consequence of the poles of the S-matrix inside the physical strip, is a well-known feature in the form factor program \cite{BabujianFoersterKarowski:2006}.

Further, we expect that properties (P2) and (P3) can stand unmodified from \ref{sym} and \ref{period}, while (P5), which is required for the well-definedness of the expansion \eqref{exp}, will possibly need small modifications from \ref{bouedge}; these depend on details of the domain of definition of \eqref{exp} in a theory with bound states, which we do not enter here.

We will also need to properly adapt \ref{bouinter} to account for the extra poles of $F_k(\pmb{\zeta})$ at $\zeta_n -\zeta_m = \frac{2\pi i}{3}$ with $n>m$, yielding a new condition (P6). 

However the essential new ingredient to be added to \cite[Definition 5.3]{BC14} is described as follows:
\begin{enumerate}
\renewcommand{\theenumi}{(P\arabic{enumi})}
\renewcommand{\labelenumi}{\theenumi}
\setcounter{enumi}{6}
\item \label{resbou} The $F_k$ have first order poles at $\zeta_n -\zeta_m = \frac{2\pi i}{3}$, where $1\leq m < n \leq k$ and one has
\begin{equation*}
\operatorname*{res}_{\zeta_2 -\zeta_1 = \frac{2\pi i}{3}} F_k (\pmb{\zeta}) = \frac{\eta}{2\pi }F_{k-1}\Big(\zeta_1 +\frac{\pi i}{3}, \zeta_3, \ldots, \zeta_k \Big);
\end{equation*}
the residues at the other poles can again be inferred from \ref{sym}.
\end{enumerate}
A similar formula can be found in the form factor program \cite{BabujianFoersterKarowski:2006} up to a constant factor, where it characterizes the form factors of point-like localized fields in models with bound states.

\subsubsection*{Sample computations}
We would like to show that if the form factors $F_k$ of an observable $A$ fulfill (P1)--\ref{resbou} then $A$ is local with respect to the wedge local field $\fct$, namely $[A,\fct(f)]=0$ and $[A,\fct'(g)]=0$ where $f$ and $g$ are suitably regular functions (c.f.\! \cite[P.13]{Lechner03}, Proposition \ref{pr:chi:symmetry} and Theorem \ref{theo:commutator}) with supports in $W_\L -(0,r)$ and $W_\R +(0,r)$, respectively.

Here we will only show that $[A, \fct'(g)]= [A, \chi'(g)] +  [A, \phi'(g)]=0$ in matrix elements between one-particle states $\Psi,\Phi$ on the level of formal computation. 

We compute the commutator $ \langle \Phi, [A, \chi'(g)] \Psi \rangle =  \langle \Phi, A \chi'(g) \Psi \rangle - \langle \Phi , \chi'(g) A \Psi \rangle$ using the actions of  \eqref{exp} and of $\chi'(g)$ derived in Section \ref{chi} on the vectors $\Psi, \Phi$. We obtain
\begin{subequations}
\begin{align}
\langle \Phi, A \chi'(g) \Psi \rangle =
& -i\eta \epslim   \int\, d\theta d\xi F_2\left(\theta +i\epsilon, \xi +i\pi -i\epsilon\right) g^+\left(\xi -\frac{\pi i}{3}\right) \overline{\Phi(\theta)} \Psi\left(\xi +\frac{\pi i}{3}\right) \label{1comma}\\
&-i\eta F_0 \int d\xi\, g^+\left(\xi -\frac{\pi i}{3}\right) \overline{\Phi(\xi)} \Psi\left(\xi +\frac{\pi i}{3}\right); \label{1commb}
\end{align}
\end{subequations}
\begin{subequations}
\begin{align}
\langle \Phi, \chi'(g)A \Psi \rangle =
&-i\eta \epslim \int\, d\xi d\rho\, F_2\left(\xi +\frac{\pi i}{3} -2i\epsilon, \rho +i\pi -i\epsilon\right)g^+\left(\xi -\frac{\pi i}{3}\right) \overline{\Phi(\xi)} \Psi(\rho)\label{1commc}\\
&-i\eta F_0 \int d\xi\, g^+\left(\xi -\frac{\pi i}{3}\right) \overline{\Phi(\xi)}\Psi\left(\xi +\frac{\pi i}{3}\right),\label{1commd}
\end{align}
\end{subequations}
noting that in the expansion \eqref{exp} the coefficients which contribute to the matrix elements $\langle \Phi, A \chi'(g) \Psi \rangle$ and $\langle \Phi, \chi'(g)A \Psi \rangle$ are $F_0$ and $F_2$.
We observe that the $\epsilon$-prescription in the argument of $F_2$ in \eqref{1comma} was introduced in \cite{BC14}, even if $F_2(\pmb{\zeta})$ is actually analytic at $\zeta_2 -\zeta_1 =i\pi$. 
Further, the vector $(A\Psi)_1(\zeta_1) = \int d\zeta_2\, F_2(\zeta_1, \zeta_2) \Psi(\zeta_2)$ in \eqref{1commc} is no longer analytic at $\zeta_1 = \frac{\pi i}{3}$ due to the pole of the integrand at $\zeta_2 -\zeta_1 = \frac{2\pi i}{3}$, so that $\chi'$ as given in Section \ref{chi} cannot be applied to $(A\Psi)_1(\zeta_1)$. However,
by the symmetry of $\chi'(g)$, we can apply it to the vector $\Phi$ and then shifting the integral contour, we obtain the boundary value indicated in \eqref{1commc}.

The terms \eqref{1commb} and \eqref{1commd} depending on $F_0$ cancel each other.

We shift the integral contour in \eqref{1comma} in the variable $\xi$ by $-\frac{i\pi}{3} +i0$ (taking care to not cross the pole hyperplane of $F_2 (\pmb{\zeta})$ at $\operatorname{Im}(\zeta_2 -\zeta_1) = \frac{2\pi}{3}$). For this the integrand is required to decay at infinity which should follow from the growth properties of $g_{\pm}$ and from (P6).
\begin{align*}
 \eqref{1comma} &=-i\eta \epslim \int\, d\theta d\xi F_2\left(\theta +i\epsilon, \xi +i\pi -\frac{i\pi}{3}+2i\epsilon\right) g^+\left(\xi -\frac{2\pi i}{3}\right) \overline{\Phi(\theta)} \Psi(\xi).
\end{align*}
Hence, we find
\begin{subequations}
\begin{align}
\langle \Phi, [A, \chi'(g)] \Psi \rangle =& -i\eta \epslim\int\, d\theta d\xi F_2\left(\theta +i\epsilon, \xi +\frac{2i\pi}{3}+2i\epsilon\right) g^+\left(\xi -\frac{2\pi i}{3}\right) \overline{\Phi(\theta)} \Psi(\xi) \label{Achi1}\\
&+i\eta \epslim\int\, d\xi d\rho\, F_2\left(\xi +\frac{\pi i}{3} -2i\epsilon, \rho +i\pi -i\epsilon\right)g^+\left(\xi -\frac{\pi i}{3}\right) \overline{\Phi(\xi)} \Psi(\rho).\label{Achi2}
\end{align}
\end{subequations}
We consider now the commutator $\langle \Phi, [A, \phi'(g)] \Psi \rangle =\langle \Phi, A z'^\dagger(g^+) \Psi\rangle
-\langle \Phi, z'^\dagger(g^+)A \Psi\rangle 
+\langle \Phi, A z'(J g^-)  \Psi \rangle
-\langle \Phi, z'(Jg^-)A  \Psi\Omega \rangle$, where we inserted the expression of $\phi'(g)$. 

Using the action of \eqref{exp} and of $z',z'^\dagger$ on the one-particle vectors $\Psi, \Phi$, and noting that to these matrix elements the coefficients in \eqref{exp} which contribute are $F_1$ and $F_3$, one computes
\begin{subequations}
\begin{align}
\langle \Phi, [A, \phi'(g)] \Psi \rangle =& 
\epslim \int d\theta d\rho d\xi\, F_3(\theta +i\epsilon, \xi +i\pi -2i\epsilon, \rho +i\pi -i\epsilon) g^+(\xi) \overline{\Phi(\theta)}\Psi(\rho) \label{F3plus}\\
&+ \int d\xi d\rho\, F_1(\xi +i\pi -i0) g^+(\xi) \overline{\Phi(\rho)} \Psi(\rho) \overline{S(\xi -\rho)} \\
&-\epslim \int d\theta d\rho d\xi\, F_3 (\theta +i\epsilon, \xi +2i\epsilon, \rho +i\pi -i\epsilon) g^-(\xi)\overline{\Phi(\theta)}\Psi(\rho) \label{F3minus}\\
&- \int d\rho d\xi\, F_1(\xi +i0) g^-(\xi) \overline{\Phi(\rho)} \Psi(\rho)S(\xi -\rho),
\end{align}
\end{subequations}
where we used that $\overline{(Jg^-) (\xi)}= g^-(\xi)$; for the direction of the boundary values in $F_3$ and $F_1$ see \cite[Definition 4.3]{BC14}.

By shifting the integral in \eqref{F3minus} in the variable $\xi$ from $\mathbb{R}$ to $\mathbb{R} +i\pi$, we find  \eqref{F3plus} up to residues (again bounds at infinity enter). Specifically, the analytically continued function $F_3(\theta +i\epsilon, \xi', \rho +i\pi -i\epsilon)$ has a pole at $\xi' - \theta -i\epsilon = \frac{2\pi i}{3}$ and at $\rho +i\pi -i\epsilon -\xi' = \frac{2\pi i}{3}$. Using \ref{resbou}, the residue at the first pole is given by
\begin{equation*}
\operatorname*{res}_{\xi' -\theta -i\epsilon = \frac{2\pi i}{3}}F_3(\theta +i\epsilon, \xi', \rho +i\pi -i\epsilon)
=- \frac{\eta}{2\pi} F_2 \Big(\theta +i\epsilon +\frac{i\pi}{3}, \rho +i\pi -i\epsilon \Big).
\end{equation*}
This residue gives a contribution to the difference of the two integrals \eqref{F3minus} and \eqref{F3plus}, namely
\begin{equation*}
\eqref{F3plus} + \eqref{F3minus}=- i \eta\int d\theta d\rho\,  F_2\left(\theta +i\epsilon +\frac{\pi i}{3}, \rho +i\pi -i\epsilon\right)g^-\left(\theta +i\epsilon +\frac{2\pi i}{3}\right)\overline{\Phi(\theta)}\Psi(\rho),
\end{equation*}
where the orientation of the residue is fixed using \cite[Lemma 3.4]{BC14}.

In computing the second residue
\begin{multline}
+\operatorname*{res}_{\rho +i\pi -i\epsilon -\xi'= \frac{2\pi i}{3}} F_3(\theta +i\epsilon, \xi', \rho +i\pi -i\epsilon)\\
=
\operatorname*{res}_{\rho +i\pi -i\epsilon -\xi'= \frac{2\pi i}{3}} 
S(\xi' -\theta -i\epsilon)S(\rho +i\pi -i\epsilon -\theta -i\epsilon)F_3(\xi',\rho +i\pi -i\epsilon,\theta +i\epsilon) \\
=\frac{\eta}{2\pi}F_2\left(\theta +i\epsilon, \rho +\frac{2\pi i}{3} -i\epsilon\right),
\end{multline}
we can assume $\theta \neq \rho$ since it suffices to compute a meromorphic function on any open set.
Note that the two $S$-factors are analytic near the pole hyperplane $\rho +i\pi -i\epsilon -\xi'= \frac{2\pi i}{3}$ and only $F_3$ has a pole there. In the second equality we used \ref{resbou} and the Bootstrap equation \ref{bootstrap}. The sign of the residue is fixed by using \cite[Lemma 3.4]{BC14}. 

This residue gives a contribution to the difference of the integrals in \eqref{F3minus} and \eqref{F3plus}, namely
\begin{equation*}
\eqref{F3plus} + \eqref{F3minus} = +i \eta \int d\theta d\rho\, F_2\left(\theta +i\epsilon,\rho +\frac{2i\pi}{3} -i\epsilon\right) g^-\left(\rho +i\pi -i\epsilon -\frac{2\pi i}{3}\right)\overline{\Phi(\theta)}\Psi(\rho).  
\end{equation*}
Summarizing our results, we have
\begin{subequations}
\begin{align}
\langle \Phi, [A,\phi'(g)] \Psi \rangle =& -i\eta  \epslim \int d\theta d\rho\, F_2 \left(\theta +i\epsilon +\frac{\pi i}{3},\rho + i\pi -i\epsilon\right) g^+\left(\theta +i\epsilon -\frac{\pi i}{3}\right)\overline{\Phi(\theta)}\Psi(\rho) \label{Aphi1}\\
&+i\eta  \epslim\int d\theta d\rho\, F_2\left(\theta +i\epsilon,\rho +\frac{2\pi i}{3}-i\epsilon\right) g^+\left(\rho -i\epsilon -\frac{2\pi i}{3}\right)\overline{\Phi(\theta)} \Psi(\rho)\label{Aphi2}\\
&+ \int d\xi d\rho\, F_1(\xi +i\pi -i0) g^+(\xi) \overline{\Phi(\rho)} \Psi(\rho) \overline{S(\xi -\rho)} \label{F1}\\
&- \int d\rho d\xi\, F_1(\xi +i0) g^-(\xi) \overline{\Phi(\rho)} \Psi(\rho)S(\xi -\rho) \label{F1two},
\end{align}
\end{subequations}
where we used that $g^+(\theta) = g^-(\theta \pm i\pi)$.

In \eqref{Achi1} the coefficient $F_2(\pmb{\zeta})$ approaches the pole in the direction $\zeta_2 -\zeta_1 = \frac{2\pi i}{3} +i\epsilon$, instead in \eqref{Aphi2} it approaches the pole in the direction $\zeta_2 -\zeta_1 = \frac{2 \pi i}{3} -i\epsilon$. So the difference of these boundary values is again a residue that we can compute using \ref{resbou} and \cite[Lemma 3.4]{BC14},
\begin{equation*}
 \operatorname*{res}_{\xi -\theta =0} F_2 \left(\theta,\xi +\frac{2\pi i}{3}\right) = +\frac{\eta}{2\pi} F_1 \left(\theta +\frac{\pi i}{3} \right).
\end{equation*}
This residue gives a contribution to the difference of \eqref{Achi1} and \eqref{Aphi2}, namely
\begin{equation}\label{f1}
\eqref{Achi1} + \eqref{Aphi2}= +(i\eta)^2 \int d\theta\, F_1\left(\theta +\frac{i\pi}{3}\right)g^+\left(\theta -\frac{2\pi i}{3}\right)\overline{\Phi(\theta)}\Psi(\theta).
\end{equation}
Likewise, the difference of the boundary values in the integrals \eqref{Achi2} and \eqref{Aphi1} gives the residue
\begin{equation*}
 \operatorname*{res}_{\rho -\xi =0} F_2\left(\xi +\frac{\pi i}{3},\rho +i\pi\right)= -\frac{\eta}{2\pi} F_1 \left(\rho +\frac{2\pi i}{3}\right),
\end{equation*}
yielding
\begin{equation}\label{f1two}
\eqref{Achi2} +\eqref{Aphi1}= -(i\eta)^2 \int d \rho\, F_1 \left(\rho +\frac{2\pi i}{3}\right)g^+\left(\rho -\frac{\pi i}{3}\right)\overline{\Phi(\rho)}\Psi(\rho). 
\end{equation}
We now consider \eqref{F1} and \eqref{F1two}. When shifting the integral contour in \eqref{F1two} in the variable $\xi$ from $\mathbb{R}$ to $\mathbb{R} +i\pi$, $F_1$ is analytic, but the factor $S(\zeta)$ has poles at $\zeta =\frac{2\pi i}{3}, \frac{\pi i}{3}$, the residues of which we denote by $R,R'$, respectively. Using \cite[Equation (3.14)]{BC14}, the difference of the boundary values in the integrals \eqref{F1} and \eqref{F1two} gives the following residue
\begin{multline}\label{f1finalres}
 \eqref{F1} +\eqref{F1two}
= -\big( \operatorname*{res}_{\xi -\rho = \frac{2\pi i}{3}} + \operatorname*{res}_{\xi -\rho = \frac{\pi i}{3}} \big) \int d\rho\, F_1(\xi) g^-(\xi)\overline{\Phi(\rho)}\Psi(\rho)S(\xi -\rho) \\
 =-2\pi i R \int d\rho\, F_1\left(\rho +\frac{2\pi i}{3}\right) g^-\left(\rho +\frac{2\pi i}{3}\right)\overline{\Phi(\rho)}\Psi(\rho) \\
   -2\pi i R' \int d\rho\,  F_1\left(\rho +\frac{\pi i}{3}\right)g^-\left(\rho +\frac{\pi i}{3}\right)\overline{\Phi(\rho)}\Psi(\rho). 
\end{multline}
Comparing this with \eqref{f1two}  and \eqref{f1}, we find
\begin{equation*}
-\eqref{f1finalres} = \eqref{f1two} +\eqref{f1},
\end{equation*}
where we used that $R'=-R$, $\eta =i\sqrt{2\pi |R|}$, $R =i |R|$ and  $g^+(\rho) = g^-(\rho \pm i\pi)$.

Hence, the matrix element $\langle \Phi, [A, \fct'(g)] \Psi \rangle$ vanishes.

\subsection{A guesswork for the bound-state operator}\label{guess}
Our operator $\chi(f)$ appears to be new and it is worth presenting how one can ``find'' it
by formal arguments. Again in this subsection we ignore all subtleties with domains
and assume that we can compute everything termwise.

Borchers, Buchholz and Schroer proved in \cite{BBS01} that, for any Haag-Kastler net with mass eigenvalues,
one can construct polarization-free generators, namely (unbounded) operators which create one-particle
states from the vacuum and are affiliated to the von Neumann algebra associated to wedges.
In particular, if we could construct a Haag-Kastler net for a given scattering function $S$,
then there would have to be polarization-free generators. On the other hand, the expansion (\ref{exp})
in Section \ref{compatibility} should exist, possibly in a modified form.
Let us assume that $B$ is a polarization free generator.
As $B$ generates only a one-particle state from the vacuum, by a (termwise) straightforward computation,
one realizes that $f_{m,0}^{[B]}$ must vanish except $m=0,1$. We may assume that $B$ is symmetric,
then also $f_{0,n}^{[B]}$ survive only if $n=0,1$.

Let $h$ be a real test function such that $\supp h \subset W_\L - (0,r)$.
The field $\phi(h) = z^\dagger(h^+) + z(J_1h^-)$ of \cite{Lechner03}
has only $(m,n) = (0,1), (1,0)$ components.
As we know that this cannot be wedge-local, let us consider the simplest variation of it.
The $(0,0)$ component is a scalar and has no effect on the commutation relations.
Assume $B$ has a $(1,1)$ component, namely that $B = z^\dagger(h^+) + X + z(J_1h^-)$, that $X$ preserves the one-particle space and $X\Omega = X^*\Omega = 0$.

Suppose that we have the Haag-Kastler net. If $B$ is affiliated to $\A(W_\L - (0,r))$,
then it must commute with operators $A$ which arise from form factors as in Section \ref{compatibility}.
Especially, for a one-particle vector $\Phi$, the commutator
$\<\Omega, [A,B]\Phi\>$ must vanish. We may assume that both $A$ and $B$ are symmetric,
and thus $h^+ = J_1h^-$.
The vector $\Omega$ is annihilated by $z(J_1h^-)$ and $X$, therefore, the only remaining terms
are
\begin{align*}
 \<\Omega, [A,B]\Phi\> &= \<\Omega, Az^\dagger(h^+)\Phi\> + \<A\Omega, X\Phi\> - \<z^\dagger(h^+)\Omega, A\Phi\> \\
 =& \int d\theta_1d\theta_2\, \left(f_{0,2}^{[A]}(\theta_1,\theta_2)h^+(\theta_2)\Phi(\theta_1) 
 - f_{1,1}^{[A]}(\theta_2,\theta_1)\overline{h^+(\theta_2)}\Phi(\theta_1)\right) + \<A\Omega, X\Phi\> \\
 =& \int d\theta_1d\theta_2\, \big(F_2^{[JA^* J]}(\theta_1 +i0,\theta_2 +i0)h^+(\theta_2)\\
  &\;\;\;\;- F_2^{[JA^* J]}(\theta_1 + i0,\theta_2 + i\pi - i0)h^+(\theta_2 + \pi i)\big)\Phi(\theta_1) 
  + \<A\Omega, X\Phi\> \\
 =& \int d\theta_1\; i\eta F_1^{[JA^*J]}\left(\theta_1 + \frac{\pi i}3\right)h^+\left(\theta_1 + \frac{2\pi i}3\right)\Phi(\theta_1) + \<A\Omega, X\Phi\> \\
 =& \int d\theta_1\; i\eta F_1^{[JA^* J]}(\theta_1)h^+\left(\theta_1 + \frac{\pi i}3\right)\Phi\left(\theta_1 - \frac{\pi i}3\right) + \<A\Omega, X\Phi\> \\
 =& \int d\theta_1\; i\eta F_1^{[A]}(\theta_1 + i\pi - i0)h^+\left(\theta_1 + \frac{\pi i}3\right)\Phi\left(\theta_1 - \frac{\pi i}3\right) + \<A\Omega, X\Phi\> \\
 =& -\<A\Omega, \chi(h)\Phi\> + \<A\Omega, X\Phi\>,
\end{align*}
where in the 3rd equality we used $S$-periodicity of $F_2$, and the relations between form factors of $A$ and $JA^*J$
in \cite[Proposition 8.9]{Cadamuro12} (which is also used in the 6th equality).
In the 4th equality, we used the residue formula \ref{resbou}  and then in the 5th equality we shifted the
integral contour by $-\frac{\pi i}3$, where we assumed that $\Phi$ does not have poles
(again, this is a guesswork).
Now, if we assume the Reeh-Schlieder property for the net $\A$, there must be sufficiently many
local observables $A$, in particular, there are many such $A$'s that
$\H_1$ is spanned by the one-particle components of $A\Omega$'s.
As we require that the above commutator should vanish, it must hold that $X\Phi = \chi(h)\Phi$.
Namely, we correctly guessed the one-particle action of $\chi(h)$.
We know that this choice works, as we saw in Section \ref{field-scalar}.

Note that $\chi(h)$ cannot be written exactly in the form of (\ref{exp}), due to its subtle
domain property. Recall \cite{Cadamuro12} that the expansion (\ref{exp}) is available only
for certain regular operators.
We were fortunate that this guesswork resulted in a correct answer,
but in more complicated models it fails \cite{CT15-2}.
On the other hand, if the Haag-Kaslter net for the model existed at all, there would have to be polarization-free generators
\cite{BBS01}. This failure might be a consequence of the complicated domains, or one would have to
add higher component in the expansion (\ref{exp}).
After all, we are not excluding wedge-local fields which are more complicated, yet have
better domain properties.

\section{Conclusions and outlook}\label{sec:conclusions}

In the present work, our aim was to extend Lechner's construction of two-dimensional models of
quantum field theory with a factorizing scattering matrix to models with bound states,
which are associated with poles of the S-matrix in the physical strip.
We considered scalar S-matrices with only one pair of poles in the physical strip,
corresponding to two bosons of the same species which fuse into another boson of the same species.

The properties of the new class of S-matrices include the presence of these extra poles in
the physical strip (and the values of the corresponding residues of the S-matrix at these points)
and the Bootstrap equation, in addition to the properties that can be found in \cite{Lechner:2006}.
Quantum integrable models with underlying S-matrix fulfilling this new set of properties include
the Bullough-Dodd model.

For this class of S-matrices we constructed wedge-local fields $\fct(f)$ by adding the
bound-state operator $\chi$ to the wedge-local field of Lechner. 
However, with the simple domain taken here, the field $\fct(f)$ is expected not to be
self-adjoint and we have not shown strong commutativity.
Rather, we have shown that the field $\fct(f)$ weakly commute with its reflected field $\fct'(g)$
on a common domain.

Furthermore, we gave a partial answer to the question of compatibility of our construction
with the form factor program. In particular, we conjectured that a suitable variation of the
characterization theorem for local observables in \cite{BC14} holds in our class of models:
we argued explicitly that this is true at least in one-particle states.

\subsubsection*{Open problems}
One of the major open problems is to show the strong commutativity of the fields
$\fct(f)$ and $\fct'(g)$ in order to obtain a Borchers triple.
This would mean not only proving the existence of self-adjoint extensions
of the two fields, but also selecting extensions that strongly commute.
This is a highly non-trivial task for which the results available in the literature can
offer only partial answers. Some results on the construction of extensions have already
been obtained by one of the authors \cite{Tanimoto15-1}.

Following Buchholz and Lechner \cite{BL04}, the natural next step would then be to show the
Bisognano-Wichmann property, i.e., the geometric action of the modular group.
Thereafter, one would like to establish modular nuclearity condition \cite{Lechner08}
in order to prove that the wedge algebras so generated are split,
and therefore the non-triviality of the double-cone algebras.

Another interesting problem would be an extension of our construction, valid for the
moment only for scalar S-matrices, to a larger class of integrable models with bound
states, such as the $Z(N)$-Ising model \cite{BFK06} and the sine-Gordon model \cite{ZZ79}
for a certain range of the coupling constant.
Indeed, in the $Z(3)$
model one can find a multi-component field which fulfills weak wedge-commutativity at
least on the level of formal computation \cite{CT15-2}.
In the $Z(N)$ (with $N > 3$) and sine-Gordon models,
this will hold only for certain components of the fields.
The construction involves a multi-component generalization of the bound-state operator $\chi$,
which will now modify the multi-component field of Lechner-Sch\"utzenhofer \cite{LS14}.
The matrix-valued analogue of the properties of $S$ introduced in Section \ref{sec:scalar}
will need to include the Yang-Baxter equation, which will play an important role in the proof
of wedge-commutativity.

\subsubsection*{Acknowledgement}
We thank Marcel Bischoff, Henning Bostelmann, Wojciech Dybalski, Gian Michele Graf, Gandalf Lechner,
Valter Moretti and Bert Schroer for interesting discussions, and Luca Giorgetti for useful comments.

We benefitted from the opportunity during the workshop
``Algebraic Quantum Field Theory: Its status and its future'' at Erwin Schr\"odinger Institute, Vienna.
This work was supported by Grant-in-Aid for JSPS fellows 25-205.

\appendix
\section{Classification of scalar S-matrices}\label{classification}
Here we classify all the functions $S(\zeta)$ which satisfy the conditions \ref{unitarity}--\ref{poles-boundedness}
in Section \ref{scalar-s}.
Let us take such an $S$. By \ref{poles-boundedness}, $S$ has simple poles at $\zeta = \frac{\pi i}3, \frac{2\pi i}3$.
Recall the function
\[
f_\frac23(\zeta) = -\frac{\sinh\frac12\left(\zeta + \frac{\pi i}3\right)}{\sinh\frac12\left(\zeta - \frac{\pi i}3\right)}
  \frac{\sinh\frac12\left(\zeta + \frac{2\pi i}3\right)}{\sinh\frac12\left(\zeta - \frac{2\pi i}3\right)},
\]
which satisfies \ref{unitarity}--\ref{bootstrap}. It has simple poles exactly at $\zeta = \frac{\pi i}3, \frac{2\pi i}3$,
and is bounded below in $\RR + i[0,\pi]$.
Let us consider
\[
 \underline S(\zeta) = \frac{S(\zeta)}{f_\frac23(\zeta)}.
\]
Now this $\underline S$ satisfies \ref{unitarity}--\ref{bootstrap},
is analytic and bounded in $\RR + i(0,\pi)$.
Therefore, by the exponential map from $\RR + i(0,\pi)$ onto the unit disk in $\CC$,
it follows that $\underline S$ is an inner function and
admits the factorization $\underline S = c\cdot S_\infty \cdot S_\blaschke$
into a constant $c$ with $|c| = 1$, the singular factor $S_\infty$
which has no zero and the Blaschke factor $S_\blaschke$
\cite[Theorem 17.15]{Rudin87}. Both of them satisfy \ref{unitarity}.
See also \cite[Appendix A]{LW11} and \cite[Section 1]{LST13} for its description in $\RR + i(0,\pi)$.

Let us look at the bootstrap equation \ref{bootstrap},
where the boundary value is represented as $S(\theta) = \frac{S(\theta + \frac{\pi i}3)}{S(\theta + \frac{2\pi i}3)}$.
The right-hand side is meromorphic, therefore, $S$ must have a meromorphic continuation
in a neighborhood of $\RR$ as well.
Yet, by \ref{unitarity}, the boundary value has modulus $1$, hence it is actually continuous
in a neighborhood of $\RR$.
In general, an inner function can have essential singularities at the boundary.
It follows from the observations above that the only possible essential singularities
in the picture of $\RR + i(0,\pi)$ is $\theta = \pm \infty$, as in \cite[Appendix A]{LW11}.

By the uniqueness of the factorization, the properties \ref{hermitian}--\ref{crossing}
must be satisfied by the constant, the singular and the Blaschke factors separately.
Therefore, $c = \pm 1$.
Consider next the singular part $S_\infty$. It admits an integral representation
as \cite[Theorem 17.15]{Rudin87} and it has essential singularities at
points in the support of the measure.
By considering the symmetry \ref{hermitian} and the possible essential singularities as above,
it has the form $S_\infty(\theta) = e^{ia(e^\theta - e^{-\theta})}, a \ge 0$.
By a straightforward computation, this satisfies \ref{bootstrap} and $S_\infty(\frac{\pi i}3) = S_\infty(\frac{2\pi i}3) > 0$.

Let us then turn to the Blaschke factor $S_\blaschke$.
It can be written as the following infinite product:
\[
 S_\blaschke(\zeta) = \prod_n c_n\frac{e^\zeta - e^{\a_n}}{e^\zeta - e^{\overline{\a_n}}},
\]
where $\a_n \in \RR + i(0,\pi)$ and $c_n = -\frac{|\b_n|}{\b_n}\frac{1-\b_n}{1-\overline{\b_n}}$,
is a constant, in accordance with \cite[Theorem 15.21]{Rudin87}, where $\b_n = \frac{e^{\a_n}-i}{e^{\a_n}+i}$
(and $c_n = 1$ if $\a_n = \frac{\pi i}2$ (and hence $\beta_n = 0$) by convention).
These $\a_n$'s are exactly the zeros of $S_\blaschke$.
By \ref{hermitian}, $\a_n$ is purely imaginary or it must appear in a pair with
$-\overline{\a_n}$ (including multiplicity). Next,
by \ref{crossing}, $\im \a_n = \frac\pi 2$ or it must appear in a pair with $i\pi - \a_n$.
In each case, it is straightforward to see that the constant factors $c_n$ cancel each other
and $S_\blaschke(\zeta) = \prod_n \frac{e^\zeta - e^{\a_n}}{e^\zeta - e^{\overline{\a_n}}}$.

As $\underline S$ and $S_\infty$ satisfy the bootstrap equation \ref{bootstrap},
so does $c\cdot S_\blaschke$.
Therefore, if there is any $\a_n$ such that $\im \a_n < \frac{\pi}3$,
then $0 < \im \a_n + \frac{\pi}3 < \pi$ and $\a_n + \frac{\pi i}3$ must be another zero.
If $\frac{\pi}3 < \im \a_m$, then by combining \ref{crossing} and \ref{bootstrap},
we may assume that there is an $\a_n$ such that $\im \a_n < \frac{\pi}3$.

We summarize these observations. Let $\alpha_n$ with $\re \a_n \neq 0$ appear in the product.
Then, if $\im \a_n = \frac{\pi}3$ (if $\im \a_m = \frac{2\pi}3$, by \ref{crossing} we may assume
that there is $\a_n$ such that $\im \a_n = \frac{\pi}3$), then the factor
\[
  \frac{(e^\zeta - e^{\a_n})(e^\zeta - e^{-\overline{\a_n}})(e^\zeta - e^{\pi i -\a_n})(e^\zeta - e^{\pi i + \overline{\a_n}})}{(e^\zeta - e^{\overline{\a_n}})(e^\zeta - e^{-\a_n})(e^\zeta - e^{\pi i - \overline{\a_n}})(e^\zeta - e^{\pi i + \a_n})}
\]
must appear. If $\im \a_n \neq \frac{\pi}3$, then we may assume that $\im \a_n < \frac{\pi}3$ and
\begin{align*}
  & \frac{(e^\zeta - e^{\a_n})(e^\zeta - e^{-\overline{\a_n}})(e^\zeta - e^{\pi i -\a_n})(e^\zeta - e^{\pi i + \overline{\a_n}})}{(e^\zeta - e^{\overline{\a_n}})(e^\zeta - e^{-\a_n})(e^\zeta - e^{\pi i - \overline{\a_n}})(e^\zeta - e^{\pi i + \a_n})} \times \\
  & \frac{(e^\zeta - e^{\a_n + \frac{\pi i}3})(e^\zeta - e^{-\overline{\a_n +\frac{\pi i}3}})(e^\zeta - e^{\pi i - (\a_n+\frac{\pi i}3)})(e^\zeta - e^{\pi i + \overline{\a_n+\frac{\pi i}3}})}{(e^\zeta - e^{\overline{\a_n - \frac{\pi i}3}})(e^\zeta - e^{-(\a_n+\frac{\pi i}3)})(e^\zeta - e^{\pi i - \overline{\a_n+\frac{\pi i}3}})(e^\zeta - e^{\pi i + \a_n+\frac{\pi i}3})}
\end{align*}
must appear. Both of these factors satisfy \ref{unitarity}--\ref{crossing}, and also \ref{bootstrap}
by straightforward computations. Furthermore, these factors take positive numbers if $\zeta$ is purely imaginary:
each of these products satisfies \ref{hermitian}, hence it takes non-zero real value on the imaginary axis,
therefore, it has a fixed sign, but at $\zeta = 0$ it can be directly checked that it takes $1$,
hence must take positive values on the imaginary axis.

Finally, if $S_\blaschke$ has a zero at $\a_n$ such that $\re \a_n = 0$,
then the factors $f_{\frac{B_k}3 - \frac23}(\zeta) f_{-\frac{B_k}3}(\zeta)$ of Section \ref{scalar-s}
must appear.
These factors satisfies \ref{unitarity}--\ref{bootstrap}, and
$f_{\frac{B_k}3 - \frac23}(\frac{\pi i}3)f_{-\frac{B_k}3}(\frac{2\pi i}3) = f_{\frac{B_k}3 - \frac23}(\frac{2\pi i}3) f_{-\frac{B_k}3}(\frac{2\pi i}3)< 0$.
Now, as the Blaschke factor is determined by zeros, $S_\blaschke$ satisfies \ref{bootstrap} by itself,
which implies that $c$ also must satisfy \ref{bootstrap}, namely $c=1$.
In order to assure \ref{poles-boundedness}, $S_\blaschke$ can have an odd number of
factors as $f_{\frac{B_k}3 - \frac23}(\zeta) f_{-\frac{B_k}3}(\zeta)$
(infinite is impossible, because $S(\zeta)$ would have an essential singularity
at $\zeta = 0$, which we have excluded), so that the residues of the factor $f_\frac23$
gets multiplied by a negative number in the whole two-particle S-matrix
$S(\zeta) = f_\frac23(\zeta)\cdot S_\infty(\zeta) S_\blaschke(\zeta)$
(recall that $f_\frac23$ has only simple poles).
Now it is obvious that $S_\blaschke(0) = 1$.

Finally, a general $S$ is of the form
\[
 S(\zeta) = f_\frac23(\zeta)\cdot e^{ia(e^\zeta - e^{-\zeta})}S_\blaschke(\zeta),
\]
where the Blaschke factor $S_\blaschke$ contains only combinations specified above
and should satisfies the Blascke condition.

Now, as one has $f_\frac23(0) = -1, e^{ia(e^0 - e^0)} = 1, S_\blaschke(0) = 1$,
it holds that $S(0) = -1$, namely we have \ref{fermionic}.

\section{Lemmas on shifting integral contour}
Here we state some easy facts explicitly for the sake of clarity.

\begin{lemma}\label{lm:abstractshift}
 Let $A_1$ and $A_2$ be positive self-adjoint operators which strongly commute.
 If $\Psi, \Phi \in \dom(A_1)\cap\dom(A_2)$, then for any $0\le \epsilon\le 1$, we have that
 $\Psi, \Phi \in \dom(A_1^{1-\epsilon} A_2^\epsilon)$ and it holds that
 \[
  \<A_1\Phi, A_2\Psi\> = \<A_1^\epsilon A_2^{1-\epsilon}\Phi, A_1^{1-\epsilon}A_2^\epsilon\Psi\> .
 \]
\end{lemma}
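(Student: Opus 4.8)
The plan is to reduce everything to the joint spectral calculus of the commuting pair $(A_1,A_2)$. Since $A_1$ and $A_2$ are strongly commuting positive self-adjoint operators, the spectral theorem for commuting self-adjoint operators provides a joint projection-valued measure $E$ on $[0,\infty)^2$ with $A_i=\int \lambda_i\,dE(\lambda_1,\lambda_2)$ and, more generally, $A_1^sA_2^t=\int \lambda_1^s\lambda_2^t\,dE(\lambda_1,\lambda_2)$ for $s,t\ge 0$, where the left-hand side is the composition of the (still strongly commuting) operators $A_1^s$ and $A_2^t$. For a vector $\xi$ write $\mu_\xi(\cdot):=\<\xi,E(\cdot)\xi\>$ for the associated finite positive measure; then $\xi\in\dom(A_1^sA_2^t)$ if and only if $\int \lambda_1^{2s}\lambda_2^{2t}\,d\mu_\xi<\infty$.

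First I would settle the domain assertion. If $\Psi\in\dom(A_1)\cap\dom(A_2)$ then $\int(\lambda_1^2+\lambda_2^2)\,d\mu_\Psi<\infty$. For $0\le\epsilon\le 1$ the weighted arithmetic--geometric mean inequality gives $\lambda_1^{1-\epsilon}\lambda_2^{\epsilon}\le(1-\epsilon)\lambda_1+\epsilon\lambda_2\le\lambda_1+\lambda_2$, hence $\lambda_1^{2(1-\epsilon)}\lambda_2^{2\epsilon}\le(\lambda_1+\lambda_2)^2\le 2(\lambda_1^2+\lambda_2^2)$, so $\int\lambda_1^{2(1-\epsilon)}\lambda_2^{2\epsilon}\,d\mu_\Psi<\infty$, i.e.\ $\Psi\in\dom(A_1^{1-\epsilon}A_2^\epsilon)$; exchanging the exponents shows $\Psi\in\dom(A_1^\epsilon A_2^{1-\epsilon})$ as well, and the same applies to $\Phi$, so all four vectors occurring in the statement lie in the required domains.

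Then I would compute both sides as integrals against the complex joint spectral measure $\mu_{\Phi,\Psi}(\cdot):=\<\Phi,E(\cdot)\Psi\>$. By the Cauchy--Schwarz inequality for spectral measures one has $\int|\lambda_1\lambda_2|\,d|\mu_{\Phi,\Psi}|\le\big(\int|\lambda_1\lambda_2|\,d\mu_\Phi\big)^{1/2}\big(\int|\lambda_1\lambda_2|\,d\mu_\Psi\big)^{1/2}$, which is finite because $|\lambda_1\lambda_2|\le\tfrac12(\lambda_1^2+\lambda_2^2)$; thus $\lambda_1\lambda_2$ is $\mu_{\Phi,\Psi}$-integrable and the manipulations below are legitimate. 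Using that $\lambda_1,\lambda_2$ are real,
\[
 \<A_1\Phi, A_2\Psi\>=\int\overline{\lambda_1}\,\lambda_2\,d\mu_{\Phi,\Psi}=\int\lambda_1\lambda_2\,d\mu_{\Phi,\Psi},
\]
while, since $\lambda_1^\epsilon\lambda_2^{1-\epsilon}$ and $\lambda_1^{1-\epsilon}\lambda_2^\epsilon$ are likewise real with product $\lambda_1\lambda_2$,
\[
 \<A_1^\epsilon A_2^{1-\epsilon}\Phi, A_1^{1-\epsilon}A_2^\epsilon\Psi\>=\int\lambda_1^\epsilon\lambda_2^{1-\epsilon}\cdot\lambda_1^{1-\epsilon}\lambda_2^\epsilon\,d\mu_{\Phi,\Psi}=\int\lambda_1\lambda_2\,d\mu_{\Phi,\Psi}.
\]
Comparing the two displays yields the claimed identity.

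The argument is essentially bookkeeping; the only point deserving care is the existence and use of the joint spectral measure, i.e.\ that strong commutativity of $A_1$ and $A_2$ is exactly what licenses the simultaneous functional calculus and the identification of the composition $A_1^{1-\epsilon}A_2^\epsilon$ with $\int\lambda_1^{1-\epsilon}\lambda_2^\epsilon\,dE$ on the domain described above. I expect this to be the only potential obstacle, and it is resolved by the standard spectral theorem for commuting families of self-adjoint operators.
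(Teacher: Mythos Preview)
Your proof is correct and follows essentially the same route as the paper: reduce to the joint spectral calculus of the commuting pair, verify the domain inclusion by a pointwise inequality on the spectral side, and then read off the identity as an equality of spectral integrals. The only cosmetic difference is that you bound $\lambda_1^{2(1-\epsilon)}\lambda_2^{2\epsilon}$ via the weighted AM--GM inequality, whereas the paper obtains the same integrability via H\"older's inequality applied to the factors $e^{2(1-\epsilon)a_1}|\Psi|^{2(1-\epsilon)}\in L^{1/(1-\epsilon)}$ and $e^{2\epsilon a_2}|\Psi|^{2\epsilon}\in L^{1/\epsilon}$; both arguments are equivalent here.
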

This might seem obvious, and indeed the proof is not too complicated, yet
one should note that $A_1A_2\Psi$ etc. are ill-defined.
\begin{proof}
 Consider the joint spectral decomposition with respect to $A_1$ and $A_2$.
 Any vector $\Psi$ is a (possibly vector-valued) $L^2$-function on this spectral space of two variables
 $a_1, a_2$, such that $A_1$ and $A_2$ act by multiplication with $e^{a_1}, e^{a_2}$, respectively.
 
 Now, the hypothesis that $\Psi \in \dom(A_1)$ means that $\Psi(a_1,a_2)$ and $e^{a_1}\Psi(a_1,a_2)$ are
 both $L^2$. This implies that $e^{2(1-\epsilon) a_1}|\Psi(a_1,a_2)|^{2(1-\epsilon)}$
 is $L^\frac1{1-\epsilon}$. 
 Similarly, if $\Psi \in \dom(A_2)$, then $e^{a_2}\Psi(a_1,a_2)$ is $L^2$ and hence,
 $e^{2\epsilon a_2}|\xi(a_1,a_2)|^{2\epsilon}$ is $L^{\frac1\epsilon}$.

 The first claim of the present lemma is that $\Psi \in \dom(A_1^{1-\epsilon}A_2^\epsilon)$, which is equivalent to
 $e^{2(1-\epsilon)a_1 + 2\epsilon a_2}|\Psi(a_1,a_2)|^2$ being $L^1$.
 This is a consequence of H\"older's inequality and of the estimates above.
 
 The desired equality is obtained simply by writing it as the spectral integral:
 \begin{align*}
  \<A_1\Phi, A_2\Psi\> &= 
  \int da_1da_2\, e^{a_1}\overline{\Phi(a_1,a_2)}e^{a_2}\Psi(a_1,a_2) \\
  &= \int da_1da_2\, e^{\epsilon a_1 + (1-\epsilon)a_2}\overline{\Phi(a_1,a_2)}e^{(1-\epsilon)a_1 + \epsilon a_2}\Psi(a_1,a_2) \\
  &=  \<A_1^\epsilon A_2^{1-\epsilon}\Phi, A_1^{1-\epsilon}A_2^\epsilon\Psi\>.
 \end{align*}
\end{proof}

\begin{lemma}\label{lm:l2shift}
 Let $\K$ be a Hilbert space.
 Let $\epsilon > 0$ and $\Psi, \Phi$ be $\K$-valued analytic functions in $\RR + i(-\epsilon,0)$
 such that for $-\epsilon < \a < 0$,  $\Psi(\theta + i\a), \Phi(\theta + i\a)$ are
 $\K$-valued $L^2$-functions in $\theta$, uniformly bounded in $\a$.
 Then it holds that
 \[
  \int d\theta\, \<\Phi(\theta - \epsilon i), \Psi(\theta)\> =
  \int d\theta\, \<\Phi(\theta), \Psi(\theta - \epsilon i)\>.
 \]
\end{lemma}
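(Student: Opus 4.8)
\emph{Proof plan.} The idea is to exhibit both integrands as the restrictions to two horizontal lines of a single \emph{scalar} holomorphic function, and then to shift the contour. For $\zeta$ in the strip $\RR+i(-\epsilon,0)$ I would set
\[
 H(\zeta):=\<\Phi(\bar\zeta-\epsilon i),\,\Psi(\zeta)\>.
\]
Since $\im(\bar\zeta-\epsilon i)=-\im\zeta-\epsilon$, both arguments lie in $\RR+i(-\epsilon,0)$ exactly when $\im\zeta\in(-\epsilon,0)$, so $H$ is well defined there. The map $\zeta\mapsto\Phi(\bar\zeta-\epsilon i)$ is an anti-holomorphic $\K$-valued map, and the inner product is anti-linear in its first slot, so $H$ is holomorphic on the strip (one checks this at once by inserting the local power series of $\Phi$ and $\Psi$). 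Specializing, $H(\theta)=\<\Phi(\theta-\epsilon i),\Psi(\theta)\>$ and $H(\theta-\epsilon i)=\<\Phi(\theta),\Psi(\theta-\epsilon i)\>$, so the assertion of the lemma becomes exactly $\int_\RR H(\theta)\,d\theta=\int_\RR H(\theta-\epsilon i)\,d\theta$; identifying this $H$ is the one substantive step.

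Next I would record the bounds needed for the contour shift. By Cauchy--Schwarz and the hypothesis (note $-\a-\epsilon\in(-\epsilon,0)$ for $\a\in(-\epsilon,0)$),
\[
 \|H(\,\cdot\,+i\a)\|_{L^1(\RR)}\le\|\Phi(\,\cdot\,+i(-\a-\epsilon))\|_{L^2}\,\|\Psi(\,\cdot\,+i\a)\|_{L^2}\le C^2,
\]
where $C$ is the common uniform $L^2$-bound; thus $H(\,\cdot\,+i\a)\in L^1(\RR)$ uniformly in $\a$. For $-\epsilon<\a_1<\a_2<0$ I would apply Cauchy's theorem to $H$ on the rectangle with vertices $\pm R+i\a_1,\ \pm R+i\a_2$. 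By Tonelli the function $\theta\mapsto\int_{\a_1}^{\a_2}|H(\theta+is)|\,ds$ is in $L^1(\RR)$, so there is a sequence $R_n\to\infty$ along which the two vertical edge contributions tend to $0$; letting $R_n\to\infty$ and using that the horizontal slices lie in $L^1$ gives $\int_\RR H(\theta+i\a_1)\,d\theta=\int_\RR H(\theta+i\a_2)\,d\theta$.

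Finally I would pass to the boundaries by letting $\a\to0^-$ and $\a\to-\epsilon^+$. Here I would invoke the standard fact that a $\K$-valued holomorphic function on a strip with uniformly $L^2$-bounded horizontal slices belongs to the Hardy space of that strip, hence has $L^2$ boundary values attained in $L^2$-norm; applying this to $\Phi$ and $\Psi$ and using Cauchy--Schwarz once more gives $H(\,\cdot\,+i\a)\to H(\,\cdot\,)$ as $\a\to0^-$ and $H(\,\cdot\,+i\a)\to H(\,\cdot\,-\epsilon i)$ as $\a\to-\epsilon^+$, both in $L^1(\RR)$, which combined with the previous step finishes the proof. The only point that is not pure bookkeeping is this boundary-value step; it can be bypassed entirely if one reads the lemma as a statement about functions whose $L^2$ boundary values are prescribed in advance — which is the situation in every application of the lemma in this paper — in which case only the interior contour shift together with dominated convergence in $\a$ is needed.
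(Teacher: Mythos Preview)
Your proof is correct. The construction of the scalar holomorphic function $H(\zeta)=\langle\Phi(\bar\zeta-\epsilon i),\Psi(\zeta)\rangle$ is the right object, the Tonelli argument for disposing of the vertical edges is sound, and the Hardy-space boundary-value step is standard and sufficient for the passage to $\im\zeta\in\{0,-\epsilon\}$.

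The paper, however, takes a different and shorter route: it observes that $\Phi,\Psi$ may be regarded as vectors in $L^2(\RR)\otimes\K$ lying in the domain of $A\otimes\1$, where $A$ is the positive self-adjoint operator whose imaginary powers implement real translation by $t\epsilon$ (equivalently, $A$ acts by $\xi\mapsto\xi(\cdot-i\epsilon)$ on its domain). The claimed identity then reads $\langle A\Phi,\Psi\rangle=\langle\Phi,A\Psi\rangle$, which is the $\epsilon=0$ case of the preceding abstract Lemma~\ref{lm:abstractshift} with $A_1=A$, $A_2=\1$; the latter is proved purely by joint spectral calculus and H\"older's inequality. The paper also mentions as an alternative repeating the symmetry proof of Proposition~\ref{pr:chi:symmetry}, which proceeds by first passing to a core of vectors with compactly supported Fourier transform (hence rapidly decreasing on every horizontal line), where the contour shift is elementary, and then extending by closure. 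Your argument is a self-contained complex-analytic version of this second alternative that avoids both the spectral reduction and the passage to a core, at the price of invoking Hardy-space boundary regularity; the paper's spectral approach is more economical because the uniform $L^2$ bound is exactly the domain condition for a single positive operator, so no contour-shifting or boundary-value analysis is needed at all.
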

\begin{proof}
 $\Phi$ and $\Psi$ can be considered as vectors in $L^2(\RR,d\theta)\otimes \K$
 and in the domain of the operator $A\otimes \1$, where $(A^{it}\xi)(\theta) = \xi(\theta + t\epsilon)$.
 One can either apply Lemma \ref{lm:abstractshift}, or essentially repeat the proof of Proposition \ref{pr:chi:symmetry}
 by scaling the variable and putting a constant instead of $f^+$.
\end{proof}

\section{Comments on the domains}\label{domain}
We have seen that, if $f$ is a real test function
supported in $W_\L$, then $\chi(f)$ is a symmetric operator.
Then the question arises whether $\chi(f)$ is (essentially) self-adjoint and, if not,
what the appropriate extension is.

\subsubsection*{There are many self-adjoint extensions}
We claim that the answer to the first question is in general no. 
Indeed, if one considers a similar operator which is a product of a multiplication operator
by an analytic function and an imaginary shift, then its properties depend very much on the 
analytic function. More precisely, the deficiency indices of the product operator
are tightly related to the zeros of the analytic function and its decay rate as $|\re \zeta| \to \infty$
\cite{Tanimoto15-1}.
In our case, as $f^+$ is strongly decreasing, we expect that $\chi_1(f) = x_f \Delta^{\frac16}$
should have deficiency indices $(\infty,\infty)$.

It is actually very easy to see that there are more than one self-adjoint extension
if $f$ is of a special form. Namely, let $f = h * \bar h$ be a convolution.
Then, after simple computations, one obtains
\begin{align*}
 \chi_1(f) &\subset \sqrt{2\pi|R|} \cdot x_h\Delta^\frac1{12}\cdot \Delta^\frac1{12}x_h^*, \\
 \chi_1(f) &\subset \sqrt{2\pi|R|} \cdot \Delta^\frac1{12}y_h\cdot y_h^*\Delta^\frac1{12},
\end{align*}
where $x_h$ is defined in Proposition \ref{pr:chi:symmetry} and $y_h$ is the multiplication operator by the function $h^+(\theta + \frac{\pi i}2)$.
It is easy to see that $\Delta^\frac1{12}x_h^*$ and $\Delta^\frac1{12}y_h^*$
are densely defined and closed, therefore $x_h\Delta^\frac1{12}$ and $y_h\Delta^\frac1{12}$
are closable. For any closed operator $X$, $X^*X$ is self-adjoint by a theorem of von Neumann
\cite[Theorem X.25]{RSII}. Therefore, we have two extensions
$\overline{x_h\Delta^\frac1{12}}\cdot \Delta^\frac1{12}x_h^*$ and
$\Delta^\frac1{12}y_h\cdot \overline{y_h^*\Delta^\frac1{12}}$, which are self-adjoint due
to the theorem of von Neumann. One can see that they are indeed different extensions when
$h^+$ has zeros in the strip.

Self-adjoint extensions of $\chi_n(f)$ are more complicated. One can find extensions using
the above decomposition of $\chi_1(f)$ and again the theorem of von Neumann,
but it is hard to establish strong commutativity between $\fct(f)$ and $\fct'(g)$.

Therefore, the situation is quite different from that of \cite{Lechner03}.
There, the wedge-local field $\phi(f)$ (for an analytic $S$, hence without the $\chi(f)$ term)
is bounded on each subspace of a fixed particle number,
essentially self-adjoint on the space of finite particle number and bounded by the Hamiltonian.
In particular, there is only one self-adjoint extension, which is the closure of $\phi(f)$ defined
on the space of finite particle number.

\subsubsection*{Standard tools fail}
In our case, $\chi(f)$ is already unbounded on each of $\H_n$. It is neither bounded by
the Hamiltonian, therefore one cannot use Nelson's commutator theorem with it \cite[Theorem X.36]{RSII}.
The imaginary shift operator $\Delta$ (or some powers of it) can bound $\chi_1(f)$,
but it cannot bound the commutator. Indeed, the self-adjoint domain should highly depend
on $f$, therefore the commutator theorem which would imply essential self-adjointness
on the domain of a common operator should not be used here.

Another standard tool to prove self-adjointness is the analytic vector theorem of Nelson
\cite[Theorem X.39]{RSII}. Again, this theorem cannot be used here because it would prove
essential self-adjointness, while the right self-adjoint domain of $\chi_1(f)$ should
depend on $f$. But it is interesting to see that the vacuum vector $\Omega$
is not an analytic vector for $\fct(f) = \phi(f) + \chi(f)$. Indeed, $\fct(f)\Omega = f^+ \in \H_1$.
But by definition $f^+$ has a double-exponentially
diverging continuation on the lower strip, and the second application of $\fct(f)$
which contains $\chi(f)$ requires such a continuation.
Although $\chi(f)$ contains the multiplication by $f^+(\cdot + \frac{i\pi}3)$,
altogether it is generically not $L^2$. The third application of $\fct(f)$ is even worse.
In other words, $\Omega$ is not in the naive domain of $\fct(f)^2$.

As $\chi(f)$ is generically not essentially self-adjoint on a naive domain,
there is no hope to apply simply perturbation results, e.g.\! the Kato-Rellich theorem \cite[Theorem X.12]{RSII},
which would imply that the perturbed operator has again the same domain of self-adjointness of
the unperturbed operator under control.

\subsubsection*{Different extensions correspond to different physics}

As remarked above, it is not difficult to find a single self-adjoint extension of $\chi(f)$.
Thereafter, one can show that $\fct(f)$ is essentially
self-adjoint on a simple domain\footnote{We owe this argument to Henning Bostelmann.}.
However, proving that $\fct(f)$ and $\fct'(g)$ strongly commute turns out to be hard.
Indeed, as the operators $\fct(f)$ and $\fct'(g)$ depend on the chosen extensions,
we have to choose right extensions so that the two extensions strongly commute.

Neither is this problem trivial, nor should it be dismissed as
``technical''. The problem of self-adjoint extensions appears also
in the study of differential operators. It is very well known (e.g.\! \cite[Section X.1, Example 2]{RSII})
that different self-adjoint extensions of a same differential operator may correspond to different
boundary conditions, therefore to different systems and different physics, depending
on interpretations.

We will need self-adjoint extensions of $\fct(f)$ and $\fct'(g)$ which strongly commute.
As the strong commutativity depends on the chosen extensions, finding the right self-adjoint
extensions of $\chi(f)$ and $\chi'(g)$ is unavoidable.
By comparing with the situation of differential operators, it is reasonable to believe that
the problem of several self-adjoint extensions, which does not appear in the cases with analytic S-matrices,
is rooted in the essential properties of bound states, therefore worth a serious investigation.

We plan to systematically study this issue in \cite{Tanimoto15-1}.

{\small

}
\end{document}